\documentclass{siamltex}
\usepackage{amsfonts, amsmath, amssymb, psfrag, graphicx, bbm, mathrsfs,
enumitem,calc,booktabs}
\usepackage{soul}
\usepackage[colorlinks=true]{hyperref}
\hypersetup{urlcolor=blue, citecolor=blue, linkcolor=blue}

\newtheorem{remark}[theorem]{Remark}

\def\fin{\ifmmode{\Large$\diamond$}\else{\unskip\nobreak\hfil
    \penalty50\hskip1em\null\nobreak\hfil{\Large$\diamond$}
    \parfillskip=0pt\finalhyphendemerits=0\endgraf}\fi}

\def\be#1#2\ee{\begin{equation}\label{eq:#1}#2\end{equation}}
\def\req#1{{\rm(\ref{eq:#1})}}
\def\bdm  {\begin{displaymath}}
  \def\edm  {\end{displaymath}}
\def\bdmal{\begin{displaymath}\begin{aligned}}
    \def\edmal{\end{aligned}\end{displaymath}}

\mathchardef\PhiG="0108
\mathchardef\PsiG="0109
\mathchardef\YG="0107
\mathchardef\Sigma="0106
\mathchardef\Gamma="0100
\mathchardef\Delta="0101
\mathchardef\Lambda="0103
\mathchardef\Omega="010A

\newcommand{\Chat}{{\widehat{C}}}

\renewcommand{\L}{{\mathscr L}}
\newcommand{\N}{{\mathord{\mathbb N}}}
\newcommand{\R}{{\mathord{\mathbb R}}}

\newcommand{\C}{{\mathord{\mathbb C}}}
\newcommand{\E}{{\mathord{\mathbb E}}}

\newcommand{\Normal}{{\cal N}}

\renewcommand{\L}{{\mathscr L}}

\newcommand{\norm}[1]{\|#1\|}

\newcommand{\Real}{{\rm Re\,}}
\newcommand{\Imag}{{\rm Im\,}}

\newcommand{\rmd}{\,\mathrm{d}}
\newcommand{\ds}{\rmd s}

\newcommand{\dt}{\rmd t}

\newcommand{\dtau}{\rmd \tau}

\newcommand{\dW}{\rmd W}

\newcommand{\rmi}{\mathrm{i}}
\newcommand{\eps}{\varepsilon}

\def\rge#1{{\mathcal R}(#1)}
\def\krn#1{{\mathcal N}(#1)}

\def\req#1{{\rm(\ref{eq:#1})}}

\makeatletter
\newcommand{\dupdots}{\mathinner{\mkern1mu\raise\p@
    \vbox{\kern7\p@\hbox{.}}\mkern2mu
    \raise4\p@\hbox{.}\mkern2mu\raise7\p@\hbox{.}\mkern1mu}}
\makeatother
\makeatletter
\newcount\c@MaxMatrixCols \c@MaxMatrixCols=10
\newenvironment{cmatrixc}{%
  \hskip -\arraycolsep\array{*\c@MaxMatrixCols c}%
}{%
  \endarray \hskip -\arraycolsep
}
\makeatother
\newenvironment{cmatrix}{\left[\cmatrixc}{\endmatrix\right]}
 {\endMakeFramed}

\newcommand{\Ftilde}{{\widetilde F}}

\newcommand{\phihat}{{\widehat\varphi}}

\newcommand{\gammahat}{{\widehat\gamma}}

\makeatletter
\renewcommand\@biblabel[1]{#1.}
\makeatother

\title{On data-driven parameterizations of multidimensional generalized Langevin dynamics 
in the presence of a quadratic potential}

\author{M.~Braun\thanks{Institut f\"ur Mathematik, Johannes
    Gutenberg-Universit\"at Mainz, 55099 Mainz, Germany
    ({\tt bmaximi@uni-mainz.de})} 
    \and 
    M.~Hanke\thanks{Institut f\"ur Mathematik, Johannes
    Gutenberg-Universit\"at Mainz, 55099 Mainz, Germany
    ({\tt hanke@math.uni-mainz.de})}
    \and
    N.~Wolf\thanks{Institut f\"ur Chemie, Technische Universit\"at Darmstadt,
    64287 Darmstadt, Germany ({\tt wolf@cpc.tu-darmstadt.de})}
}

\begin{document}
\maketitle

\begin{abstract}
We propose a numerical algorithm to construct a Markov model with an
extended list of variables to parameterize the equation of motion of a
multidimensional coarse-grained physical system in an external potential, when memory
effects are relevant. Our method uses autocorrelation data of the
stationary velocities, but it avoids the inverse problem of finding the
corresponding memory kernel from these data in a first step.
Rather, the data are used to construct a Prony series approximation of 
the autocorrelation function, and the parameters of this Prony series 
provide the corresponding Markov model.
Numerical results for molecular dynamics data 
show a good match for parameterized models with five auxiliary variables 
for a one-dimensional, and twelve auxiliary variables for a two-dimensional system.
\end{abstract}

\begin{keywords} 
coarse-graining, stationary process, external potential, extended Markov model
\end{keywords}

\begin{AMS}
  {\sc 65L09, 65D15, 41A20}
\end{AMS}

%\submitto{\JPCM}
%\bigskip 
%\bigskip
%\noindent
%Version of \today
%\bigskip
%\ioptwocol

\sloppy

%%%%%%%%%%%%%%%%%%%%%%%%%%%%%%%%%%%%%%%%%%%%%%%%%%%%%%%%%%%
\section{Introduction}
\label{Sec:Intro}
%%%%%%%%%%%%%%%%%%%%%%%%%%%%%%%%%%%%%%%%%%%%%%%%%%%%%%%%%%%
The generalized Langevin equation is often used in statistical physics as the
equation of motion of a particle, or a set of particles, when the background
of the system is eliminated from the physical description.
Collecting the spatial coordinates of the centers of mass of the particle(s) of 
interest and their velocities in $d$-dimensional vectors $R$ and $V$, 
respectively, these equations take the form
\begin{equation}
\label{eq:GLE}
\begin{aligned}
   \dot{R}(t) &\,=\, V(t)\,, \\
   M\,\dot{V}(t) 
   &\,=\, -\nabla \PhiG(R) \,-\, \int_0^t\gamma(t-s)V(s)\ds \,+\, F(t)\,,
\end{aligned}
\end{equation}
together with initial conditions at time $t=0$.
Here, $M\in\R^{d\times d}$ is the (symmetric positive definite) mass matrix, 
$\PhiG=\PhiG(R)$ is an external scalar potential,
and the memory kernel $\gamma$ with values in $\R^{d\times d}$ and the 
(stochastic) fluctuating force vector $F\in\R^d$ represent the interactions 
of the particle(s) with the eliminated background of the system; this 
relationship is often associated with a so-called 
fluctuation-dissipation relation, which couples the autocorrelation function 
of the fluctuating force and the memory kernel via
\be{CF}
   C_F(t) \,=\, \E\bigl[F(s+t)F(s)^T\bigr] \,=\, \frac{1}{\beta}\,\gamma(t)\,,
   \qquad s,t\geq 0\,,
\ee
where the constant $\beta>0$ denotes the inverse temperature. We refer to
Zwanzig~\cite{Zwan01} for the physical reasoning behind this model and to
Pavliotis~\cite{Pavl14} for a discussion of \req{GLE}, \req{CF} from a 
mathematical point of view.

In practice, however, an explicit expression for the memory kernel $\gamma$
is usually not available, and the same may be true for the external potential.
A natural way out of the latter problem is to linearize the generalized
Langevin equation and to replace $\PhiG$ by a quadratic potential 
\be{Phi}
   \PhiG(x) \,=\, \frac{1}{2}\,x^T\Omega x\,, \qquad x\in\R^d\,,
\ee
where the symmetric positive definite stiffness matrix $\Omega$ 
is deduced from the sample covariance of the particle positions; 
cf., e.g., Ma, Li, and Liu~\cite{MLL16}, and Remark~\ref{Rem:MLL16} below.

As far as the memory kernel is concerned, we first note that it is quite common
in computational physics to introduce a set of auxiliary variables $Z$ in order 
to replace the non-Markovian generalized Langevin equation~\req{GLE} 
by a Markovian model 
\be{MarkovIntro}
\begin{aligned}
   \dot{X}(t) &\,=\, Y(t)\,, \\[1ex]
   \rmd\begin{cmatrix} Y\\ Z \end{cmatrix}
   &\,=\, \begin{cmatrix} -M^{-1}\nabla\PhiG(X)\\ 0 \end{cmatrix}
          \,+\, \begin{cmatrix} \phantom{i}0 & B^T \\ -C & \Lambda \end{cmatrix}
          \begin{cmatrix} Y\\ Z \end{cmatrix}\!\dt  
          \,+\, \begin{cmatrix} 0 \\ L \end{cmatrix} \dW\,,
\end{aligned}
\ee
with constant matrices $\Lambda$, $B$, $C$, and $L$, 
where $W$ is a $d$-dimensional Brownian motion, and $X$ and $Y$ stand for 
approximations of $R$ and $V$, respectively. Note that this approximation 
is exact, if
\be{gamma-exp}
   \gamma(t) \,=\, M\,B^Te^{t\Lambda}C\,, \qquad t\geq 0\,;
\ee
compare~\cite[Proposition~8.1]{Pavl14} or \req{GLE-approx} below.
Accordingly, the standard approach in the physics literature
(cf., e.g., \cite{GLLB20,KDKSSN19,LBL16}) to find suitable parameters for 
\req{MarkovIntro} is a two-step procedure, which consists in (i) fitting the 
memory kernel $\gamma$ to observation data -- usually correlation data of
$V$ and $\nabla \PhiG(X)$ --, and (ii) subsequently approximating $\gamma$
by an exponential model of the form~\req{gamma-exp}. See also \cite{KWvV24}
for a conceptually different approach.

Note that under the assumption that $\PhiG$ is given by \req{Phi},
the system \req{MarkovIntro} can be rewritten as an Ornstein-Uhlenbeck system
\be{MarkovGoal}
   \rmd\begin{cmatrix} Y\\ Z\\ X \end{cmatrix} 
   \,=\, \begin{cmatrix}
            \,0 & \,B^T & -M^{-1}\Omega \\ -C\, & \Lambda & \phantom{i}0 \\ 
            \,I & 0 & \phantom{i}0
         \end{cmatrix}
         \begin{cmatrix} Y\\ Z\\ X \end{cmatrix}\!\dt
         \,+\, \begin{cmatrix} 0 \\ L \\ 0 \end{cmatrix}\!\dW
\ee
%\be{Markov}
%   \rmd\begin{cmatrix} Y\\ Z\\ X \end{cmatrix}
%   \,=\, A \begin{cmatrix} Y\\ Z \\ X \end{cmatrix}\!\dt \,+\, G \dW
%\ee
of dimension $N>2d$. Our contribution in this paper is an extension of a
method presented earlier in \cite{BBHSH26,Hank24} 
to directly construct a system of the form~\req{MarkovGoal} from a finite
number of equidistant samples of the velocity autocorrelation function only; 
our method does not even need the stiffness matrix $\Omega$ as input, 
in which case the (1,3)-block of the coefficient matrix in \req{MarkovGoal}
will merely be an approximation of $-M^{-1}\Omega$. We mention that
there exist similarities to a method employed by 
Baczewski and Bond~\cite{BaBo13}, and we will briefly elaborate on this in
Remark~\ref{Rem:BaBo13}.

The outline of this paper is as follows. In the next section we discuss the
existence of stationary solutions of the generalized Langevin 
system~\req{GLE}, and we derive some properties of the associated
autocorrelation functions. Subsequently, in Section~\ref{Sec:Markov1},
we present the modifications of the method in \cite{BBHSH26,Hank24} which are
necessary to obtain an Ornstein-Uhlenbeck system in the form~\req{MarkovGoal};
as we will see, by a proper postprocessing of our system we automatically
recover the approximation $X$ of $R$ from the auxiliary variables.
Section~\ref{Sec:Implementation} reviews the numerical ingredients of the 
implementation of our method. Here we aim for a self-contained presentation,
but we omit technical details which have already been
described in previous work. In Subsection~\ref{Subsec:PACF} we also 
briefly comment on the case that position autocorrelation data instead of 
velocity autocorrelation data are given; in principle it is easy to 
modify our algorithm accordingly, but our numerical results 
have been somewhat inferior for this modification. 
%because the high-frequency information content in the velocity data 
%is stronger than in the position data generically.
Finally, Section~\ref{Sec:Numerics} presents numerical
examples using molecular dynamics simulation data for one, 
respectively two, trapped particles as input.

%%%%%%%%%%%%%%%%%%%%%%%%%%%%%%%%%%%%%%%%%%%%%%%%%%%%%%%%%%%
\section{Stationary solutions of the generalized Langevin equation}
\label{Sec:Setting}
%%%%%%%%%%%%%%%%%%%%%%%%%%%%%%%%%%%%%%%%%%%%%%%%%%%%%%%%%%%
In this section we derive sufficient conditions for the existence of a
stationary solution $(R,V)$ of the generalized Langevin equation~\req{GLE}, 
\req{CF}, in the presence of a quadratic potential~\req{Phi}.
Without loss of generality we assume throughout that the mass
matrix in \req{GLE} is of the form 
\be{mass}
   M\,=\,mI
\ee
for some scalar mass value $m>0$;
otherwise we reformulate \req{GLE} as a generalized Langevin equation for
$M^{-1/2}R$ and $M^{-1/2}V$. We will further assume that $\gamma$ 
is extended to all of $\R$ by
\bdm
   \gamma(t) \,=\, \gamma(-t)^T\,, \qquad t<0\,,
\edm
and that this function is absolutely integrable and of strictly positive type,
i.e., its Fourier transform 
\bdm
   \gammahat(\omega) \,=\, \int_{-\infty}^\infty e^{-\rmi\omega t}\gamma(t)\dt
\edm
is positive definite for all $\omega\in\R$. Note that this guarantees
the existence of the fluctuating force term in \req{GLE}, i.e., 
a centered stationary Gaussian process $F$ which satisfies \req{CF} for all
$s,t\in\R$; compare, e.g., Lindgren~\cite{Lind13}.

\begin{proposition}
\label{Prop:FDT2}
Let $\gamma\in L^1(\R)$ be a function of strictly positive type with values
in $\R^{d\times d}$, and let the fluctuating force $F$ be a centered 
stationary Gaussian process satisfying~\req{CF}.
Then the generalized Langevin equation~\req{GLE} with mass matrix~\req{mass}
and initial data
\be{Sigma}
   \begin{cmatrix}
      R(0)\\ V(0)
   \end{cmatrix}
   \,\sim\, {\cal N}(0,\Sigma) \quad \text{with} \quad
   \Sigma \,=\, \frac{1}{\beta m}
                \begin{cmatrix} m\Omega^{-1} & 0 \\ 0 & I \end{cmatrix}
\ee
has a unique solution $(R,V)$,
and this solution is a centered stationary Gaussian process, provided the
initial data are independent of the fluctuating force. 
The autocorrelation function of this solution is given by
\be{C}
   C(t) \,=\, r(t) \Sigma\,, \qquad t\geq 0\,,
\ee
where $r:\R^+_0\to\R^{2d\times 2d}$ solves the Volterra integrodifferential equation
\be{diffresolv}
   r'(t) \,=\, -\begin{cmatrix} 0 & -I \\ \Omega/m & \ 0 \end{cmatrix} r(t)
               \,-\, \int_0^t \begin{cmatrix} 0 & 0 \\ 0 & \gamma(t-s)/m
                              \end{cmatrix} r(s)\ds\,, \qquad
   r(0) \,=\, I\,.
\ee
\end{proposition}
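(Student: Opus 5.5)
Write $U=(R,V)$ and $\rho(t)=\begin{cmatrix}0&0\\0&\gamma(t)/m\end{cmatrix}$, $A_0=\begin{cmatrix}0&-I\\\Omega/m&0\end{cmatrix}$. Then \req{GLE} with \req{Phi}, \req{mass} becomes the linear Volterra system
\bdm
   U'(t) \,=\, -A_0U(t) \,-\, \int_0^t\rho(t-s)U(s)\ds \,+\, \begin{cmatrix}0\\F(t)/m\end{cmatrix}.
\edm
Since $\gamma\in L^1$, \req{diffresolv} has a unique locally absolutely continuous resolvent $r$, by standard Volterra theory (Picard iteration on compact intervals). The variation-of-constants formula then gives the unique solution
\be{plan-voc}
   U(t) \,=\, r(t)U(0) \,+\, \int_0^t r(t-s)\begin{cmatrix}0\\F(s)/m\end{cmatrix}\ds .
\ee
This is a linear functional of the jointly Gaussian, centered family $(U(0),F)$, where $U(0)$ is independent of $F$. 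Hence $U$ is a centered Gaussian process, and uniqueness and Gaussianity are settled.

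For stationarity it suffices, by Gaussianity, to show that $\E[U(t+\tau)U(t)^T]$ depends only on $\tau$. The cleanest route is to prove directly that
\be{plan-cross}
   \E[U(t)U(0)^T] \,=\, r(t)\Sigma \qquad\text{and}\qquad \E\bigl[U(t)F(s)^T\bigr] \,=\, \text{function of } t-s .
\ee
For the first identity, multiply the equation by $U(0)^T$ and take expectations. This yields the Volterra equation \req{diffresolv} for $\E[U(t)U(0)^T]$, except for the forcing term $\E[F(t)U(0)^T]/m$. That term is \emph{not} zero a priori: $F$ is independent of $U(0)=(R(0),V(0))$, but in the true stationary picture the force is correlated with past velocities. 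So I would instead use the fluctuation--dissipation relation \req{CF} to compute the full covariance from \req{plan-voc}:
\bdm
   \E[U(t)U(s)^T] \,=\, r(t)\Sigma r(s)^T \,+\, \frac{1}{\beta m^2}\int_0^t\!\!\int_0^s r(t-u)\begin{cmatrix}0&0\\0&\gamma(u-v)\end{cmatrix} r(s-v)^T\rmd v\rmd u ,
\edm
and show that this equals $r(t-s)\Sigma$ for $t\ge s$.

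The key identity to establish, which is the main obstacle, is the classical fluctuation--dissipation computation
\be{plan-key}
   \Sigma \,-\, r(t)\Sigma r(t)^T \,=\, \frac{1}{\beta m^2}\int_0^t\!\!\int_0^t r(t-u)\begin{cmatrix}0&0\\0&\gamma(u-v)\end{cmatrix}r(t-v)^T\rmd v\rmd u .
\ee
This gives the case $s=t$, and the general case $t>s$ follows the same way. I would prove \req{plan-key} by differentiating both sides in $t$. The left side gives $-r'\Sigma r^T - r\Sigma r'^T$, into which \req{diffresolv} is inserted. The right side, by the Leibniz rule, gives boundary terms together with terms containing $r'$, which are again rewritten with \req{diffresolv}. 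The essential structural facts are:
\begin{itemize}
   \item $A_0\Sigma+\Sigma A_0^T=0$, which holds because $\Sigma$ has blocks $\Omega^{-1}/\beta$ and $I/(\beta m)$;
   \item $\rho(\tau)\Sigma=\frac1{\beta m^2}\begin{cmatrix}0&0\\0&\gamma(\tau)\end{cmatrix}$;
   \item $\gamma(-\tau)=\gamma(\tau)^T$.
\end{itemize}
With these, the derivative of the left side cancels the derivative of the right side.

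If this direct differentiation turns into a bookkeeping tangle, I would switch to a Laplace-transform argument. Writing $\hat r(z)=(zI+A_0+\hat\rho(z))^{-1}$, the symmetry $A_0\Sigma=-\Sigma A_0^T$ turns the identity into the resolvent-type relation
\bdm
   \hat r(z)\Sigma+\Sigma\hat r(w)^T \,=\, \hat r(z)\bigl[(z+w)\Sigma+\hat\rho(z)\Sigma+\Sigma\hat\rho(w)^T\bigr]\hat r(w)^T ,
\ee
which is pure algebra. The kernel term is then matched with the double transform of the right side of \req{plan-key} by splitting $\gamma$ into its parts on $u>v$ and $u<v$.

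Once $\E[U(t)U(s)^T]=r(t-s)\Sigma$ for $t\ge s$ is established, stationarity follows, and so does \req{C}, taking $s=0$. If needed, I would also check consistency at $t=0$: $\E[U(0)U(0)^T]=\Sigma$ is exactly the prescribed initial law \req{Sigma}, which is the stationary law of the equilibrium Gibbs measure $e^{-\beta(\frac12 m|v|^2+\PhiG(x))}$.
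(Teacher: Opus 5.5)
Your proposal is correct, but it takes a different route from the paper. The paper performs exactly your reduction: it writes the system for $U=(R,V)$ with $D$ and $\Gamma$ from \req{Gamma}, which are your $A_0$ and $\rho$. It then checks precisely your first two bullets, in the form $D\Sigma+\Sigma D^T=0$ and $C_{\Ftilde}(t)=\Gamma(t)\Sigma$. Finally it invokes \cite[Theorem~2.3]{Hank25}, a general stationarity theorem for linear Volterra systems with these two properties. Your variation-of-constants formula together with the two-time covariance identity amounts to a self-contained proof of that theorem in the present setting. It makes visible exactly where the relation $A_0\Sigma+\Sigma A_0^T=0$ and the fluctuation--dissipation relation \req{CF} enter. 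The citation buys brevity and generality. It also covers the technical groundwork you take for granted, namely existence of the resolvent and the meaning of $\int_0^t r(t-s)\Ftilde(s)\ds$.

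Several statements in your write-up should be corrected, although none of them breaks your final argument.

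First, the obstacle you raise for $\E[U(t)U(0)^T]$ does not exist. $U(0)$ is independent of $F$ and both are centered, so $\E[\Ftilde(t)U(0)^T]=0$, and $\E[U(t)U(0)^T]=r(t)\Sigma$ follows at once. Your own two-time covariance formula uses this very independence to drop the cross terms.

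Conversely, your second claim, that $\E[U(t)F(s)^T]$ depends only on $t-s$, is false. It vanishes for $t=s=0$, whereas for $t=s>0$ it equals
\[
   \frac{1}{\beta m}\int_0^t r(v)\begin{cmatrix}0\\ I\end{cmatrix}\gamma(v)^T\rmd v\,,
\]
which is nonzero in general. Only $U$ is stationary, not the pair $(U,F)$. Since you abandon that route, this does no harm.

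Second, in the differentiation argument it is cleanest to substitute $a=t-u$, $b=s-v$, so that only the integration limits depend on $t$ and $s$. The boundary terms then involve $(r*\rho)\Sigma$ and its transpose. To eliminate them you need the right-sided resolvent equation $r'=-rA_0-r*\rho$, which is standard (cf.~\cite{GLS90}), not \req{diffresolv} itself.

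For $t>s$, differentiate along the lines $t-s=\tau$. Setting $P_\tau(s)=\E[U(s+\tau)U(s)^T]$, you get $P_\tau'(s)=-r(s+\tau)(A_0\Sigma+\Sigma A_0^T)r(s)^T=0$ and $P_\tau(0)=r(\tau)\Sigma$. This covers every $\tau\geq 0$ at once; your key identity is the case $\tau=0$.

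Your Laplace-transform fallback is complete as it stands. The resolvent transform is a two-sided inverse, and your algebraic identity matches exactly the double transform of $r(t-s)\Sigma$ for $t\geq s$ and $\Sigma r(s-t)^T$ for $t<s$.
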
 

\begin{proof}
Let
\be{Gamma}
   U \,=\, \begin{cmatrix} R\\ V \end{cmatrix}, \qquad
   \Gamma \,=\, \begin{cmatrix} 0 & 0\\ 0 & \gamma/m \end{cmatrix},
   \qquad  \text{and} \qquad
   D \,=\, \begin{cmatrix} 0 & -I \\ \Omega/m & \phantom{-}0\, \end{cmatrix},
\ee
so that the system~\req{GLE} can be rewritten in the form 
\bdm
   \dot{U} \,=\, -DU(t) \,-\, \int_0^t \Gamma(t-s)U(s)\ds \,+\, \Ftilde(t)\,,
\edm
where
\bdm
   \Ftilde(t) \,=\, \begin{cmatrix} 0 \\ F(t)/m \end{cmatrix}\,.
\edm
According to \req{CF} $\Ftilde$ is a centered stationary Gaussian process with
autocorrelation function
\be{F0}
   C_{\Ftilde}(t) \,=\, \frac{1}{\beta m} \,\Gamma(t)
   \,=\, \Gamma(t)\Sigma\,, \qquad t\geq 0\,,
\ee
with the second identity following readily from the definitions of $\Gamma$ 
and $\Sigma$. 

The assertion of the proposition is now an immediate consequence of the
theory developed in \cite{Hank25}: Since
\bdm
   D\Sigma \,+\, \Sigma D^T \,=\, 0\,,
\edm
the claim follows from \cite[Theorem~2.3]{Hank25} by virtue of \req{F0}.
\end{proof}

\begin{remark}
\label{Rem:MLL16}
\rm
It follows from \req{Sigma} that
\bdm
   \Omega \,=\, \frac{1}{\beta}\,\E\bigl[R(t)R(t)^T\bigr]^{-1}
   \,=\, \frac{1}{\beta}\,C_R(0)^{-1}\,, \qquad
   t\geq 0\,.
\edm
Hence, this provides a means to recover $\Omega$ from the
position autocorrelation function.
\fin
\end{remark}

Rewriting
\be{r}
   r \,=\, \begin{cmatrix} r_{11} & r_{12} \\ r_{21} & r_{22} \end{cmatrix}
\ee
in terms of four $d\times d$ matrix blocks, 
the integro-differential equation~\req{diffresolv} becomes
\be{rij}
\begin{aligned}
   r_{11}'(t) &\,=\, r_{21}(t) \,, \qquad & r_{11}(0)&\,=\,I\,, \\[1ex]
   r_{12}'(t) &\,=\, r_{22}(t) \,, \qquad & r_{12}(0)&\,=\,0\,, \\[1ex]
   m\,r_{21}'(t) &
   \,=\, -\Omega r_{11}(t) \,-\, \int_0^t \gamma(t-s)r_{21}(s)\ds\,, \qquad
                  & r_{21}(0)&\,=\, 0\,, \\[1ex]
   m\,r_{22}'(t) &
   \,=\, -\Omega r_{12}(t) \,-\, \int_0^t \gamma(t-s)r_{22}(s)\ds\,, \qquad
                  & r_{22}(0)&\,=\, I\,.
\end{aligned}
\ee
From this one can derive the following properties of $C_V$.

\begin{theorem}
\label{Thm:CV}
Under the assumptions of Proposition~\ref{Prop:FDT2} 
let $(R,V)$ be the corresponding stationary solution of \req{GLE}.
Then the autocorrelation function $C_V$
%\bdm
%   C_V(t) \,=\, \E\bigl[V(s+t)V(s)^*\bigr] \,, \qquad 
%   t\in\R\,, \ s\geq\max\{0,-t\}\,,
%\edm
of the velocity component of this solution is differentiable at $t=0$ with 
\be{CVprime}
   C_V'(0) \,=\, 0\,,
\ee
and it is absolutely integrable with
\be{CVintegral}
   \int_0^\infty C_V(t) \dt \,=\, 0\,.
\ee
Furthermore, if in addition
\be{gamma-decay}
   \int_0^\infty t\,\bigl|\gamma(t)\bigr|\dt \,<\,\infty\,,
\ee
then
\be{CVMoment1}
   \int_0^\infty t\, C_V(t)\dt \,=\, -\frac{1}{\beta}\,\Omega^{-1}\,.
\ee
\end{theorem}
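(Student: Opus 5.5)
From \req{C} and \req{Sigma}, the velocity block of the autocorrelation is $C_V(t)=\frac{1}{\beta m}\,r_{22}(t)$, so all three claims reduce to statements about $r_{22}$, which we read off from \req{rij}. The plan is to work with Laplace transforms. Let $\widehat r_{ij}(\lambda)$ and $\widehat\gamma_L(\lambda)$ denote the Laplace transforms for $\Real\lambda>0$.

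\textbf{Step 1: the derivative at zero.} The fourth equation of \req{rij}, evaluated at $t=0$, gives $m\,r_{22}'(0)=-\Omega r_{12}(0)-0=0$. Here we use $r_{12}(0)=0$, and the integral term vanishes at $t=0$. This yields \req{CVprime}. Differentiability of $r$ at $0$ holds because $r$ solves the integrodifferential equation with an $L^1$ kernel.

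\textbf{Step 2: Laplace-domain formula for $r_{22}$.} Transform the second and fourth equations of \req{rij}. This gives $\lambda\widehat r_{12}=\widehat r_{22}$ and $m(\lambda\widehat r_{22}-I)=-\Omega\widehat r_{12}-\widehat\gamma_L\widehat r_{22}$. Eliminating $\widehat r_{12}$ yields
\begin{displaymath}
   \widehat r_{22}(\lambda)\,=\,\Bigl(m\lambda I+\widehat\gamma_L(\lambda)+\tfrac{1}{\lambda}\Omega\Bigr)^{-1}m
   \,=\,\lambda\bigl(m\lambda^2I+\lambda\widehat\gamma_L(\lambda)+\Omega\bigr)^{-1}m .
\end{displaymath}
Formally, $\int_0^\infty C_V=\frac{1}{\beta m}\widehat r_{22}(0)=0$, which is \req{CVintegral}. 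For the first moment, note that $\int_0^\infty t\,C_V\,\dt=-\frac{1}{\beta m}\widehat r_{22}'(0)$. Differentiating the second expression at $\lambda=0$ gives $\widehat r_{22}'(0)=\Omega^{-1}m$. Hence $\int_0^\infty tC_V\,\dt=-\frac1\beta\Omega^{-1}$, which is \req{CVMoment1}. The hypothesis \req{gamma-decay} is exactly what makes $\widehat\gamma_L$ continuously differentiable up to $\lambda=0$, so that the derivative computation and the Taylor expansion at $0$ are justified.

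\textbf{Step 3: the main obstacle, integrability of $r$.} The formal computations require $r_{22}\in L^1(\R^+)$, and for the moment formula also $t\,r_{22}\in L^1$. Only then does the Laplace transform extend continuously to $\Real\lambda\ge0$ and the evaluation at $\lambda=0$ make sense.

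For $r\in L^1$ I would invoke a Paley--Wiener type theorem for Volterra integrodifferential equations with $L^1$ kernels (as presumably used in \cite{Hank25}). This theorem requires that $\det\bigl(\lambda I+D+\widehat\Gamma_L(\lambda)\bigr)\neq0$ for $\Real\lambda\ge0$, equivalently that $m\lambda^2I+\lambda\widehat\gamma_L(\lambda)+\Omega$ is invertible there.

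\emph{Case $\lambda=0$.} The matrix reduces to $\Omega$, which is positive definite and hence invertible.

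\emph{Case $\lambda\neq0$.} Suppose $(m\lambda^2+\lambda\widehat\gamma_L+\Omega)x=0$ for some $x\neq0$. Divide by $\lambda$ and take the inner product with $x$. This gives $m\lambda|x|^2+x^*\widehat\gamma_L(\lambda)x+\bar\lambda^{-1}\ldots$; more precisely, $x^*\Omega x/\lambda$ has real part $\Real\lambda\,|\lambda|^{-2}x^*\Omega x\ge0$. Using the symmetric extension of $\gamma$, the real part of $x^*\widehat\gamma_L(\rmi\omega)x$ equals $\tfrac12x^*\gammahat(\omega)x>0$. For $\Real\lambda>0$ the same positivity follows, by harmonic extension or by the Poisson integral of the positive-type function. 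Thus the real part of the sum is strictly positive, a contradiction.

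For the first moment, the weighted version of the same theorem, with weight $1+t$ under \req{gamma-decay}, gives $t\,r\in L^1$. I expect this step — verifying the Paley--Wiener hypotheses on the closed right half-plane, in particular the positivity argument on the imaginary axis — to be the main difficulty.
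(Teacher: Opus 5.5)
Your proof is correct and shares the paper's backbone.
\begin{itemize}
\item Both reduce everything to $r_{22}=\beta m\,C_V$ and read off $r_{22}'(0)=-\Omega r_{12}(0)/m=0$.
\item Both obtain $r\in L^1(\R^+)$ from the Paley--Wiener theorem for differential resolvents (Gripenberg--London--Staffans, Theorem~3.3.5, as in the paper). The hypothesis of that theorem is checked by a positivity argument showing that $\lambda I+D+\widehat\Gamma_L(\lambda)$ is nonsingular on $\Real\lambda\geq 0$.
\item Both invoke the weighted version (Theorem~4.4.13, weight $1+t$) under \req{gamma-decay} for the first moment.
\end{itemize}

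There are two differences in execution.

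First, you test the $d\times d$ Schur complement $m\lambda^2I+\lambda\widehat\gamma_L(\lambda)+\Omega$ against a null vector. The paper instead tests the full matrix $H(\zeta)$ with the $\Sigma^{-1}$-weighted form $[x;y]^*\Sigma^{-1}H(\zeta)[x;y]$, concluding first $y=0$ and then $x=0$. The essential input is the same in both: $\Real x^*\widehat\gamma_L(\lambda)x>0$ on the closed half-plane. Your Poisson-integral justification in the open half-plane is, if anything, tidier than the paper's appeal to the maximum principle.

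Second, you evaluate both moments from the closed form $\widehat r_{22}(\lambda)=\lambda\bigl(m\lambda^2I+\lambda\widehat\gamma_L(\lambda)+\Omega\bigr)^{-1}m$ at $\lambda=0$. The paper stays in the time domain: it uses $\int_0^\infty r_{22}\dt=r_{12}\big|_0^\infty=0$, then a partial integration, the fourth equation of \req{rij} solved for $r_{12}$, and Fubini. Your route shows at a glance why only $\Omega$ survives. The paper's route avoids justifying the transformed equations up to the boundary.

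One correction of emphasis: \req{gamma-decay} is not needed to make $\widehat\gamma_L$ differentiable at $0$. Since $\widehat r_{22}(0)=0$, you have $\widehat r_{22}'(0)=\lim_{\lambda\to0^+}\bigl(m\lambda^2I+\lambda\widehat\gamma_L(\lambda)+\Omega\bigr)^{-1}m=\Omega^{-1}m$, and this uses only continuity of $\widehat\gamma_L$. The hypothesis enters solely through $t\,r_{22}\in L^1$, which is what makes $\widehat r_{22}$ differentiable up to $\lambda=0$ with $\widehat r_{22}'(0)=-\int_0^\infty t\,r_{22}(t)\dt$. Your Step~3 already supplies this.

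There is also a small omission in Step~1. ``Differentiable at $t=0$'' refers to $C_V$ on all of $\R$, extended by $C_V(-t)=C_V(t)^T$, but your computation only gives the right derivative. The left derivative is $C_V'(0-)=-C_V'(0+)^T$, which also vanishes; the paper adds exactly this line.
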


\begin{proof}
According to \req{C}, \req{diffresolv}, \req{Sigma}, and \req{r} 
the autocorrelation function of the $V$-component of the stationary solution
of \req{GLE} is given by
\bdm
   C_V(t) \,=\, \frac{1}{\beta m}\,r_{22}(t) \qquad \text{for $t\geq 0$\,.}
\edm
By virtue of \req{rij} this implies that $C_V$ is differentiable for $t>0$, 
and the right-sided differential quotient of $C_V$ at $t=0$ is given by
\bdm
   C_V'(0+) \,=\, \lim_{t\to 0\atop t>0} \frac{C_V(t)-C_V(0)}{t}
   \,=\, \frac{1}{\beta m}\,r_{22}'(0) 
   \,=\, -\frac{1}{\beta m^2}\,\Omega\, r_{12}(0) \,=\, 0\,.
\edm
Further, since $C_V(-t)=C_V(t)^T$ for $t>0$, the left-sided difference
quotient $C_V'(0-)$ has the same value. This proves the first claim.

Concerning the integral of $C_V$ we first need to settle that 
$C_V\in L^1(\R^+)$. To this end we investigate the matrices
\be{Azeta}
   H(\zeta) \,=\, \zeta I \,+\, D \,+\, \L\Gamma(\zeta)
\ee
for $\zeta\in\C$ with $\Real\zeta\geq 0$; in \req{Azeta} the matrices
$\Gamma$ and $D$ are taken from \req{Gamma}, and
\bdm
   \L\Gamma(\zeta) \,=\, \int_0^\infty e^{-\zeta t}\Gamma(t)\dt
\edm
denotes the Laplace transform of $\Gamma$, which is well defined for
$\Real\zeta\geq 0$, because $\Gamma\in L^1(\R^+)$. Let $x,y\in\C^d$ 
be such that
\be{Ax=0}
   H(\zeta_0)\begin{cmatrix} x \\ y \end{cmatrix}
   \,=\, \begin{cmatrix} 
            \zeta_0 I & -I \\ \Omega/m & \zeta_0 I + \L\gamma(\zeta_0)/m
         \end{cmatrix}
         \begin{cmatrix} x \\ y \end{cmatrix}
   \,=\, \begin{cmatrix} 0 \\ 0 \end{cmatrix}
\ee
for any given $\zeta_0$ with $\Real\zeta_0\geq 0$. 
Then the scalar complex-valued function
\bdm
   f(\zeta) \,:=\, \frac{1}{\beta}
                  \begin{cmatrix} x \\ y \end{cmatrix}^*
                  \Sigma^{-1}H(\zeta)
                  \begin{cmatrix} x \\ y \end{cmatrix}
   \,=\, \begin{cmatrix} x \\ y \end{cmatrix}^*
         \begin{cmatrix} 
            \zeta \Omega & -\Omega \\ \Omega & m\zeta I + \L\gamma(\zeta)
         \end{cmatrix}
         \begin{cmatrix} x \\ y \end{cmatrix},
\edm
which is analytic in the open right complex half plane, vanishes for 
$\zeta=\zeta_0$; furthermore, on the imaginary axis, 
i.e., for $\zeta=\rmi\omega$ with $\omega\in\R$, its real part is given by
\bdmal
   \Real f(\rmi\omega) 
   &\,=\, \frac{1}{2\beta}
          \begin{cmatrix} x \\ y \end{cmatrix}^*
          \Bigl(\Sigma^{-1}H(\rmi\omega) 
                \,+\, \bigl(\Sigma^{-1}H(\rmi\omega)\bigr)^*\Bigr)
          \begin{cmatrix} x \\ y \end{cmatrix} \\[1ex]
   &\,=\, \frac{1}{2} \begin{cmatrix} x \\ y \end{cmatrix}^*
          \begin{cmatrix} 
             0 & 0 \\ 
             0 & \L\gamma(\rmi\omega)\,+\,\bigl(\L\gamma(\rmi\omega)\bigr)^*
          \end{cmatrix}
          \begin{cmatrix} x \\ y \end{cmatrix}
    \,=\, \frac{1}{2}\, y^* \gammahat(\omega) y\,.
\edmal
Therefore, as $\gamma$ is assumed to be strictly positive real, the
real part of $f$ is either zero on the entire imaginary axis when $y=0$, 
or it is positive on the entire imaginary axis when $y\neq 0$. 
Since $\Real f$ is a harmonic function on the right half plane, it follows 
from the maximum principle that in the latter case $\Real f$ is positive in
the entire right half plane; accordingly, $f(\zeta_0)=0$ implies that $y=0$.

Inserting this into \req{Ax=0} we see that
\bdm
   \begin{cmatrix} 0 \\ 0 \end{cmatrix}
   \,=\, H(\zeta_0) \begin{cmatrix} x \\ y \end{cmatrix}
   \,=\, \begin{cmatrix} \zeta_0 x \\ \Omega x/m \end{cmatrix},
\edm
which shows that $x=0$, too, because $\Omega$ is nonsingular. In other words, 
the null space of $H(\zeta_0)$ is trivial, and we therefore have established
that $H(\zeta)$ is nonsingular for every $\zeta$ in the closed right half plane.
Now it is known
(compare Gripenberg, London, and Staffans~\cite[Theorem~3.3.5]{GLS90})
that this is equivalent to the fact that the (unique) solution $r$ of 
\req{diffresolv} belongs to $L^1(\R^+)$, and so does $r'$.
From this it readily follows that $r(t)\to 0$ as $t\to\infty$.
We thus have shown that $C_V\in L^1(\R^+)$, and \req{rij} implies that
\be{CVint-tmp}
   \int_0^\infty C_V(t)\,dt \,=\, \frac{1}{\beta m} \int_0^\infty r_{22}(t)\dt
   \,=\, \frac{1}{\beta m}\,r_{12}(t)\Big|_{t=0}^\infty \,=\, 0\,.
\ee

If $\gamma$ also satisfies the decay condition~\req{gamma-decay} then
\cite[Theorem~4.4.13]{GLS90} implies that the same decay condition
is true for $r$ and $r'$, and hence, $tr(t)\to 0$ as $t\to\infty$.
We therefore obtain by partial integration that
\bdmal
   \int_0^\infty t\,C_V(t)\dt 
   &\,=\, \frac{1}{\beta m}\int_0^\infty t\, r_{22}(t)\dt
    \,=\, \frac{1}{\beta m}\, t\,r_{12}(t)\Big|_{t=0}^\infty 
          \,-\, \frac{1}{\beta m} \int_0^\infty r_{12}(t)\dt\\[1ex]
   &\,=\, \,-\, \frac{1}{\beta m} \int_0^\infty r_{12}(t)\dt\,.
\edmal
It further follows from \req{rij} that
\bdm
   r_{12}(t) \,=\, -\Omega^{-1}\left(m\,r_{22}'(t) 
                   \,+\, \int_0^t \gamma(t-s)r_{22}(s)\ds\right),
\edm
and inserting this into the previous identity yields
\bdmal 
   \int_0^\infty t\,C_V(t)\dt
   &\,=\, \frac{1}{\beta m}\,\Omega^{-1}
          \left(\int_0^\infty m\,r_{22}'(t)\dt \,+\,
                \int_0^\infty\int_0^t \gamma(t-s)r_{22}(s)\ds\dt
          \right)\\[1ex]
   &\,=\, \frac{1}{\beta m}\,\Omega^{-1}
          \left( m\,r_{22}(t)\Big|_{t=0}^\infty \,+\,
                \int_0^\infty \gamma(\tau)\dtau \int_0^\infty r_{22}(s)\ds
          \right)\,.
\edmal
According to \req{CVint-tmp} the integral over $r_{22}$ vanishes, and hence,
under the given assumptions, we arrive at
\bdm   
   \int_0^\infty t\,C_V(t)\dt \,=\, -\frac{1}{\beta}\,\Omega^{-1} r_{22}(0)
   \,=\, -\frac{1}{\beta}\, \Omega^{-1}
\edm
by virtue of \req{rij} again.
\end{proof}

\begin{remark}
\label{Rem:selfdiffusion}
\rm
In statistical physics the integral over $C_V$ is known as the 
self-diffusion coefficient of the macroparticle, cf., e.g., \cite{Zwan01}. 
This coefficient being equal to zero according to \req{CVintegral}
reflects the fact that the position of the macroparticle is a centered 
stationary process.
\fin
\end{remark}

%%%%%%%%%%%%%%%%%%%%%%%%%%%%%%%%%%%%%%%%%%%%%%%%%%%%%%%%%%%
\section{A Markov approximation}
\label{Sec:Markov1}
%%%%%%%%%%%%%%%%%%%%%%%%%%%%%%%%%%%%%%%%%%%%%%%%%%%%%%%%%%%
Theorem~\ref{Thm:CV} implies that the autocorrelation
function of the velocity component of the stationary solution of \req{GLE}
carries the entire information about the quadratic potential.
Let us accordingly assume that finitely many equidistant samples
$C_V(\nu\tau)$, $\nu=0,\dots,n$, of the autocorrelation function 
are given for some $\tau>0$ and $n\in\N$;
%of the velocity of the macroparticle are given; 
as stated in the introduction our goal is to set up an 
Ornstein-Uhlenbeck equation
\be{Markov}
   \rmd\begin{cmatrix} Y\\ Z\\ X \end{cmatrix} 
   \,=\, \begin{cmatrix}
            \,0 & \,B^T & -\Omega_0/m \\ -C\, & \Lambda & \phantom{i}0 \\ 
            \,I & 0 & \phantom{i}0
         \end{cmatrix}
         \begin{cmatrix} Y\\ Z\\ X \end{cmatrix}\!\dt
         \,+\, \begin{cmatrix} 0 \\ L \\ 0 \end{cmatrix}\!\dW
\ee
of dimension $N>2d$ with auxiliary variables $Z\in\R^{N-2d}$ and 
$\Omega_0\approx\Omega$,
such that the components $Y\in\R^d$ and $X\in\R^d$ of its stationary solution
satisfy $Y\approx V$ and $X\approx R$ in the sense that their correlations
are in good agreement. 
%In \req{Markov} $A$ and $G$ are real $N\times N$
%and $N\times d$ matrices, respectively, 
%and $W$ is a $d$-dimensional Brownian motion.

To this end we employ the method developed in \cite{Hank24} 
for scalar stochastic processes 
and extended in \cite{BBHSH26} to the case of vector-valued processes;
this method will be briefly reviewed in Section~\ref{Sec:Implementation}.
It is designed to provide an Ornstein-Uhlenbeck system
\be{Markov0}
   \rmd\begin{cmatrix} Y \\ Z_0 \end{cmatrix} 
   \,=\, \begin{cmatrix} \,0 & B_0^T \\ -C_0 & A_0\, \end{cmatrix}
         \begin{cmatrix} Y \\ Z_0 \end{cmatrix}\!\dt
         \,+\, \begin{cmatrix} 0 \\ L_0 \end{cmatrix}\!\dW
\ee
of dimension $N$ with auxiliary variables $Z_0\in\R^{N-d}$. 
The real drift matrix of \req{Markov0}, denoted by $A$ in the sequel, 
is chosen to be stable, i.e., its eigenvalues belong to
\bdm 
   \C^- \,=\, \{\,\lambda\in\C\,:\, \Real\lambda<0\,\}\,,
\edm
and the matrix $L_0\in\R^{(N-d)\times d}$ is constructed in such a way that the 
(unique) stationary solution of \req{Markov0} satisfies
\be{variances}
   C_Y(0) \,=\, \E\bigl[Y(0)Y(0)^T\bigr] \,=\, C_V(0) 
          \,=\, \frac{1}{\beta m}\,I \quad \text{and} \quad
   \E\bigl[Y(0)Z_0(0)^T\bigr] \,=\, 0\,.
\ee

\begin{remark}
\label{Rem:BaBo13}
\rm
We recall that the system~\req{Markov0} can be recast as a generalized
Langevin equation
\bdm 
   Y'(t) \,=\, -\int_0^t \gamma_{\rm eff}(t-s)Y(s)\ds \,+\, F_{\rm eff}(t)
\edm
for $Y$ with an effective memory kernel $\gamma_{\rm eff}$ and associated
fluctuating force $F_{\rm eff}$, which comes without any external potential. 
The idea of using an effective equation of this sort for $Y\approx V$, 
when $\PhiG$ in \req{GLE} is given by \req{Phi}, has been the 
starting point of the construction of Baczewski and Bond in \cite{BaBo13}; 
they employ the Laplace transform to connect the effective 
memory kernel to the given velocity autocorrelation function $C_V$. 
The advantage of our approach is that it avoids the (ill-posed)
inverse Laplace transform 
and determines suitable parameters for \req{Markov0} directly from the data.
\fin
\end{remark}

The $d\times d$ zero block in the upper left corner of the drift matrix 
$A$ in \req{Markov0} is a consequence of the side condition
\be{CYprime}
   C_Y'(0) \,=\, 0\,,
\ee
which is imposed in \cite{BBHSH26,Hank24} to mimic \req{CVprime}. 
In contrast to the method as it is presented in \cite{BBHSH26,Hank24},
here we further impose the additional constraint
\be{CYintegral}
   \int_0^\infty C_Y(t)\dt \,=\, 0
\ee
because of \req{CVintegral}; see Section~\ref{Sec:Implementation} for details. 
From \req{Markov0} and \req{variances} it follows that
\be{CY}
   C_Y(t) \,=\, \frac{1}{\beta m}\,E^T e^{tA} E \qquad \text{with} \qquad
   E \,=\, \begin{cmatrix} I \\ 0 \end{cmatrix}
\ee
for $t\geq 0$, cf., e.g., \cite[Section~3.7]{Pavl14}, and hence, 
\req{CYprime} and \req{CYintegral} are equivalent to
\be{EM1E}
   E^TAE \,=\, 0 \qquad \text{and} \qquad E^TA^{-1}E \,=\, 0\,,
\ee
respectively. 
%We will also enforce that $C_Y(0)=C_V(0)$ exactly, i.e.,
%\be{EE}
%   C_Y(0) \,=\, \frac{1}{\beta m}\,E^TE 
%   \,=\, \frac{1}{\beta m}\,I\,,
%\ee
%compare~\req{variances}.

Since $C_Y(t)=C_Y(-t)^T$ for $t<0$, \req{CY} implies that 
the Fourier transform of $C_Y$ is given by
\be{CYhat-kappa}
\begin{aligned}
   \Chat_Y(\omega) 
   &\,=\, \frac{1}{\beta m}\int_{-\infty}^0 e^{-\rmi\omega t}E^Te^{-tA^T}E\dt
          \,+\, \frac{1}{\beta m}\int_0^\infty e^{-\rmi\omega t}E^T e^{tA}E\dt
          \\[1ex]
   &\,=\, \frac{1}{\beta m}
          \bigl(\kappa(\rmi\omega)^* + \kappa(\rmi\omega)\bigr)\,,
\end{aligned}
\ee
where 
\be{kappa}
   \kappa(\zeta) \,=\, E^T(\zeta I-A)^{-1}E
\ee
is the so-called transfer function of the system~\req{Markov0}.
This is a real rational function, i.e., 
$\kappa(\overline{z})=\overline{\kappa(z)}$ for all $z\in\C$,
which is analytic in a neighborhood of the
closed right half-plane because $A$ is a stable matrix.
Further, since every autocorrelation function is a 
function of positive type, cf., e.g., \cite[Lemma~1.2]{Pavl14}, 
it follows from \req{CYhat-kappa} that
the Hermitian part of $\kappa(\zeta)$ is positive semidefinite for every 
$\zeta$ on the imaginary axis. In system theory a function $\kappa$ with
these properties is said to be positive real;
compare Anderson and Vongpanitlerd~\cite{AV73}.

\begin{theorem}
\label{Thm:A0}
Let the drift matrix $A$ of \req{Markov0} be stable, and let \req{Markov0}
have a stationary solution which satisfies \req{variances}.
Assume further that $E^TA^{-1}E=0$ with $E$ as in \req{CY}.
Then the matrix block $A_0$ in \req{Markov0} has a $d$-dimensional 
null space $\krn{A_0}$, and this is a complementary subspace of the
range space $\rge{A_0}\subset\R^{N-d}$.
\end{theorem}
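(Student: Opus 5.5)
The plan is to read off everything from the block structure of $A^{-1}$. Write
\bdm
   A \,=\, \begin{cmatrix} 0 & B_0^T \\ -C_0 & A_0 \end{cmatrix}, \qquad
   A^{-1} \,=\, \begin{cmatrix} P & Q \\ R & S \end{cmatrix},
\edm
with $P\in\R^{d\times d}$, $Q\in\R^{d\times(N-d)}$, $R\in\R^{(N-d)\times d}$, and $S\in\R^{(N-d)\times(N-d)}$. The hypothesis $E^TA^{-1}E=0$ says precisely that $P=0$.

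First I compare blocks in $AA^{-1}=I$ and $A^{-1}A=I$ with $P=0$. This gives
\bdm
   B_0^TR=I,\quad A_0R=0,\quad A_0S-C_0Q=I,\quad -QC_0=I,\quad QA_0=0,\quad RB_0^T+SA_0=I.
\edm
From $B_0^TR=I$ the matrix $R$ has rank $d$, and $A_0R=0$ gives $\rge{R}\subset\krn{A_0}$. Conversely, if $A_0x=0$, the last identity gives $x=RB_0^Tx\in\rge{R}$. Hence $\krn{A_0}=\rge{R}$, which is $d$-dimensional.

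For the range I use the matrix $\Pi=I+C_0Q$. Since $QC_0=-I$, it is idempotent. From $QA_0=0$ we get $\Pi A_0=A_0$, and from $A_0S=I+C_0Q=\Pi$ we get $\rge{\Pi}\subset\rge{A_0}$. So $\Pi$ is a projector onto $\rge{A_0}$ with kernel $\rge{C_0}$; note $\rge{C_0}$ has dimension $d$ because $QC_0=-I$. This yields $\R^{N-d}=\rge{A_0}\oplus\rge{C_0}$ and $\dim\rge{A_0}=N-2d$, consistent with rank–nullity.

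It remains to show $\krn{A_0}\cap\rge{A_0}=\{0\}$. The dimensions then add up to $N-d$, which gives the complementarity. Suppose $x=Rw=A_0z$. Applying $Q$ and using $QA_0=0$ gives $QRw=0$. Now
\bdm
   QR \,=\, [\,P\ \ Q\,]\begin{cmatrix} 0\\ R\end{cmatrix} \,=\, E^TA^{-2}E,
\edm
since $A^{-1}E=[\,0;\,R\,]$. So everything reduces to showing that $K:=E^TA^{-2}E=-\kappa'(0)$ is nonsingular. I expect this to be the main obstacle, because it is the only place where stability and the stationarity conditions \req{variances} must enter.

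My first attempt at this step uses positive realness of $\kappa$. Since $\kappa(0)=-E^TA^{-1}E=0$, we have $\kappa(\rmi\omega)=\rmi\omega\,\kappa'(0)+O(\omega^2)$. Requiring the Hermitian part to be positive semidefinite for both signs of $\omega$ forces $K$ to be symmetric. I would then try to show that $K$ is actually positive definite. The route I would take is via the Lyapunov equation $A\Sigma_0+\Sigma_0A^T=-GG^T$ for the stationary covariance $\Sigma_0=\mathrm{diag}\bigl(I/(\beta m),\Sigma_Z\bigr)$, where $G=[\,0;\,L_0\,]$ and $\Sigma_Z=\E\bigl[Z_0(0)Z_0(0)^T\bigr]$; the block-diagonal form is given by \req{variances}. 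Multiplying this equation by $A^{-1}$ from the left and $A^{-T}$ from the right, and sandwiching with $E$ and $A^{-1}E=[\,0;\,R\,]$, I hope to express $K+K^T=2K$ as a Gram-type quantity. The target is an identity like $K=\beta m\,R^T\Sigma_Z R$, which is positive definite because $\Sigma_Z>0$ and $R$ has full rank. Physically this matches \req{CVMoment1}: $\int_0^\infty t\,C_Y(t)\dt=\frac{1}{\beta m}K$ should equal minus the covariance of a stationary integrated process $\int Y$. So an alternative route is to construct that position process from $A^{-1}[\,Y;\,Z_0\,]$ and identify $K$ with its covariance. Once $K$ is nonsingular, $QRw=0$ gives $w=0$, hence $x=0$, and the theorem follows.
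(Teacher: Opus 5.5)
Your block-inverse bookkeeping is correct, and it is a cleaner route than the paper's dimension count. From $P=0$ you get $\krn{A_0}=\rge{R}$ of dimension $d$, and $\Pi=I+C_0Q$ projects onto $\rge{A_0}$ along $\rge{C_0}$. Moreover, $x=Rw=A_0z$ forces $Kw=QRw=0$ with $K=E^TA^{-2}E$, which is exactly the reduction the paper reaches ($y^TE^TA^{-2}Ey=0$ with $y=B_0^Tz\neq0$).

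The gap is the decisive step, the nonsingularity of $K$. You leave it open, and the route you sketch would not close it. The target identity $K=\beta m\,R^T\Sigma_ZR$ is false. Sandwiching $A\Sigma_0+\Sigma_0A^T=-GG^T$, for your $\Sigma_0=\mathrm{diag}\bigl(I/(\beta m),\Sigma_Z\bigr)$, between $E^TA^{-1}$ and $A^{-T}E$ gives $QL_0=0$. Sandwiching it between $E^TA^{-2}$ and $A^{-T}E$ then gives $K=-\beta m\,Q\Sigma_ZQ^T$; equivalently, $K=-(\beta m)^{-1}R^T\Sigma_Z^{-1}R$ when $\Sigma_Z$ is invertible. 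So $K$ is negative, not positive, semidefinite. This agrees with Remark~\ref{Rem:K}, where $K=-m\Omega_0^{-1}$, and with your own heuristic via \req{CVMoment1}. More seriously, $\Sigma_Z\succ0$ is not among the hypotheses and can fail. Append a decoupled, noiseless, stable auxiliary variable, whose stationary value is identically zero: every assumption of the theorem survives, but $\Sigma_Z$ becomes singular. Your first attempt stalls as well. The expansion of $\kappa$ along $\rmi\R$ near $0$ yields only the symmetry of $K$, because boundary behaviour alone cannot exclude a double zero ($-\zeta^2$ has nonnegative real part on $\rmi\R$).

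You can finish in either of two ways. The first is the paper's global argument. For $y\neq0$, $f(\zeta)=y^*\kappa(\zeta)y$ has $\Real f\geq0$ on $\rmi\R$ and $\zeta f(\zeta)\to\norm{y}_2^2$. The minimum principle then gives $\Real f>0$ in the open right half-plane, which is incompatible with $f(0)=0$ and $f'(0)=-y^*Ky=0$; hence $y^*Ky\neq0$.

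The second is to repair your Lyapunov route with the correct identity, without assuming $\Sigma_Z\succ0$:
\begin{itemize}
\item If $\Sigma_0v=0$, then $0=v^T(A\Sigma_0+\Sigma_0A^T)v=-\norm{G^Tv}^2$, hence $\Sigma_0A^Tv=-GG^Tv=0$. So $\krn{\Sigma_0}$ is $A^T$-invariant.
\item If $Kv=0$, then $\Sigma_ZQ^Tv=0$ because $\Sigma_Z\succeq0$. That is, $A^{-T}Ev=[\,0;\,Q^Tv\,]\in\krn{\Sigma_0}$, hence $Ev\in\krn{\Sigma_0}$.
\item This contradicts $\Sigma_0E=E/(\beta m)$ unless $v=0$.
\end{itemize}
This is your ``position process'' idea made rigorous. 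Indeed, $X=QZ_0$ satisfies $\rmd X=Y\dt$ by $-QC_0=I$, $QA_0=0$ and $QL_0=0$, and a degenerate direction $v^TX\equiv0$ would force $v^TY\equiv0$.
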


\begin{proof}
We start by noting that $B_0$ and $C_0$ must have full rank, because $A$
is stable, and hence nonsingular. For the same reason
the null spaces of $A_0$ and $B_0^T$ must not have a
nontrivial vector $z$ in common, for otherwise $[0;z]$ 
is a nontrivial element of the null space of $A$.
It follows that the dimension of the null space of $A_0$ can be at most $d$,
for the null space of $B_0^T$ already has dimension $N-d$.

Consider now any vector
\bdm
   w \,\in\, \rge{A_0} \,\cap\, \rge{C_0}\,.
\edm
Then there exist $z\in\R^{N-d}$ and $y\in\R^d$ with
\bdm
   w \,=\, A_0z \,=\, C_0y\,,
\edm
and hence,
\be{M-LGS}
   A \begin{cmatrix} y\\ z \end{cmatrix} 
   \,=\, \begin{cmatrix} \,0 & B_0^T \\ -C_0 & A_0\, \end{cmatrix}
         \begin{cmatrix} y\\ z \end{cmatrix}
   \,=\, \begin{cmatrix} B_0^Tz\\ 0 \end{cmatrix}.
\ee
Due to the assumption that $E^TA^{-1}E=0$, it follows from \req{M-LGS} that
\bdm 
   0 \,=\, E^TA^{-1}E\,B_0^Tz 
     \,=\, E^TA^{-1}\begin{cmatrix} B_0^Tz\\ 0 \end{cmatrix}
     \,=\, E^T\begin{cmatrix} y\\ z \end{cmatrix} 
     \,=\, y\,,
\edm
and this in turn implies that $w=C_0y=0$. Accordingly, the range spaces of
$A_0$ and $C_0$ are complementary spaces, and hence, 
$\rge{A_0} \oplus \rge{C_0}$ has dimension $(N-d-\dim\krn{A_0})+d$.
Since this cannot be larger than $N-d$, we necessarily have 
$\dim\krn{A_0}\geq d$, and therefore the dimension of the null space of 
$A_0$ is exactly equal to $d$.

For the final assertion let us assume that there is some nontrivial
element $z\in\rge{A_0}\cap\krn{A_0}$. Then there exists $x\in\R^{N-d}$, 
such that
\be{Jordanblock}
   A_0x \,=\, z \qquad \text{and} \qquad A_0z \,=\, 0\,,
\ee
and since $B_0^T$ is a bijection from $\krn{A_0}$ to $\R^d$ there further
exists $x'\in\krn{A_0}$ with
\bdm
   B_0^T x' \,=\, B_0^T x\,.
\edm
Accordingly,
\bdm
   A \begin{cmatrix} 0\\ x-x' \end{cmatrix} 
   \,=\, \begin{cmatrix} \,0 & B_0^T \\ -C_0 & A_0\, \end{cmatrix}
         \begin{cmatrix} 0\\ x-x' \end{cmatrix}
   \,=\, \begin{cmatrix} B_0^T(x-x')\\ A_0x-A_0x' \end{cmatrix}
   \,=\, \begin{cmatrix} 0\\ z \end{cmatrix},
\edm
and hence,
\bdm
   A^2 \begin{cmatrix} 0\\ x-x' \end{cmatrix} 
   \,=\, A \begin{cmatrix} 0\\ z \end{cmatrix}
   \,=\, \begin{cmatrix} \,0 & B_0^T \\ -C_0 & A_0\, \end{cmatrix}
         \begin{cmatrix} 0\\ z \end{cmatrix}
   \,=\, \begin{cmatrix} B_0^Tz\\ A_0z\end{cmatrix}
   \,=\, \begin{cmatrix} B_0^Tz\\ 0 \end{cmatrix}.
\edm
This shows that for $y=B_0^Tz\in\R^d$ there holds
\be{yEM2Ey}
   y^TE^TA^{-2}Ey 
   \,=\, y^TE^T A^{-2} \begin{cmatrix} B_0^Tz\\ 0 \end{cmatrix} 
   \,=\, y^TE^T\begin{cmatrix} 0\\ x-x' \end{cmatrix} \,=\, 0\,.
\ee
Note that $y\neq 0$ because $z\neq 0$ and the intersection of the null spaces
of $B_0^T$ and $A_0$ is trivial.

As before, let $C_Y$ be the autocorrelation function of the $Y$-component
of the stationary solution of \req{Markov0}, and $\kappa$ be given by
\req{kappa}. Since $\kappa$ is positive real, the scalar rational function
\be{f}
   f(\zeta) \,:=\, y^*\kappa(\zeta)y
\ee
with $y$ of \req{yEM2Ey} is analytic
in a neighborhood of the closed right-half complex plane with
%moreover, \req{CYhat-kappa} implies that
%\bdm
%   2\,\Real f(\rmi\omega) 
%   \,=\, y^T\bigl(\kappa(\rmi\omega)\,+\,\kappa(\rmi\omega)^*\bigr)y
%   \,=\, \beta m\,y^T\Chat_Y(\omega)y \,\geq\,0 \qquad
%   \text{for every $\omega\in\R$}\,,
%\edm
%because $C_Y$ is a function of positive type.
\bdm
   \Real f(\rmi\omega) \,\geq\, 0 \qquad
   \text{for every $\omega\in\R$}\,.
\edm
%because $\kappa$ is assumed to be positive real.
By the maximum principle for harmonic functions, 
$\Real f$ must therefore be positive 
in the open right-half plane, for if it has a zero at some $\zeta_0$ with 
$\Real\zeta_0>0$, then $\Real f$ is identically zero, in contradiction to
\bdm
   \lim_{\zeta\to\infty} \zeta f(\zeta) \,=\, \norm{y}_2^2\,>\, 0\,.
\edm
%and hence,
%\bdm
%   \Chat_Y(\omega)y \,=\, 0 \qquad \text{for all $\omega\in\R$}\,;
%\edm
%but this would imply that $C_Y(t)y=0$ for all $t\geq 0$, in contradiction
%to \req{variances}.

On the other hand, \req{kappa}, \req{EM1E}, and \req{yEM2Ey} imply that
\bdm
   f(0) \,=\, -y^*E^TA^{-1}Ey \,=\, 0 \qquad \text{and} \qquad
   f'(0) \,=\, -y^*E^TA^{-2}Ey \,=\, 0\,,
\edm
i.e., $f$ has a multiple zero at $\zeta=0$. Due to the local behavior of
analytic functions, cf., e.g., Ahlfors~\cite[Section~4.3]{Ahlf79}, 
this contradicts the fact that its real part is positive in the
intersection of any small neighborhood of $\zeta=0$ and the 
open right half-plane. This proves that no nontrivial vector $z$ 
satisfying~\req{Jordanblock} can exist, and therefore $\rge{A_0}$ and 
$\krn{A_0}$ are complementary subspaces.
\end{proof}

Consider the coordinate transformation which turns $A_0$ into its 
real Jordan canonical form. We split accordingly the solution component 
$Z_0$ of \req{Markov0} into new variables $Z\in\R^{N-2d}$ corresponding to the 
nonzero eigenvalues of $A_0$ (its range space) and the remaining 
eigencomponents $X_0\in\R^d$ from the null space of $A_0$; 
denoting by $\Lambda\in\R^{(N-2d)\times(N-2d)}$ the Jordan block associated 
with the nonzero eigenvalues of $A_0$ we can thus rewrite \req{Markov0} as
\bdm
   \rmd\begin{cmatrix} Y\,\\ Z\,\\ X_0 \end{cmatrix}
   \,=\, \begin{cmatrix}
            \,0 & \,B^T & B_1^T \\ -C\phantom{_1} & \Lambda & 0 \\ -C_1 & 0 & 0
         \end{cmatrix}
         \begin{cmatrix} Y\,\\ Z\,\\ X_0 \end{cmatrix}\!\dt
         \,+\, \begin{cmatrix} 0 \\ L \\ L_1 \end{cmatrix}\!\dW\,.
\edm
Since the drift matrix of this equation is similar to the matrix $A$ 
in \req{Markov0}, it is also nonsingular, and hence, $C_1\in\R^{d\times d}$ is 
nonsingular. Accordingly, introducing $X=-C_1^{-1}X_0$ we obtain the equivalent
system
\be{Markov1}
   \rmd\begin{cmatrix} Y\\ Z\\ X \end{cmatrix} 
   \,=\, \begin{cmatrix}
            \,0 & \,B^T & -\Omega_0/m \\ -C\, & \Lambda & \phantom{i}0 \\ 
            \,I & 0 & \phantom{i}0
         \end{cmatrix}
         \begin{cmatrix} Y\\ Z\\ X \end{cmatrix}\!\dt
         \,+\, \begin{cmatrix} 0 \\ L \\ L_2 \end{cmatrix}\!\dW
\ee
with
\[
   \Omega_0 \,=\, m\,B_1^TC_1 \qquad \text{and} \qquad L_2\,=\, -C_1^{-1}L_1\,.
\]
This is the Langevin equation~\req{Markov} that we have been heading for.
%and without loss of generality we can therefore assume in the sequel that the
%drift matrix $A$ of \req{Markov0} has the block form given in \req{Markov1}.

\begin{proposition}
\label{Prop:Markov1}
Under the assumptions of Theorem~\ref{Thm:A0}
%Let the drift matrix $M$ of \req{Markov0} be stable, and let \req{Markov0}
%have a stationary solution which satisfies \req{variances}.
%Assume further that $E^TM^{-1}E=0$ with $E$ as in \req{CY}.
the matrix block $\Omega_0$ in \req{Markov1} is 
symmetric positive definite and $L_2$ is equal to zero. 
Furthermore, \req{Markov1} admits a stationary solution, 
the covariance matrix of which is given by
\[
   \Sigma_0 \,=\,
   \frac{1}{\beta m}\,
   \begin{cmatrix}
      I & 0 & 0 \\ 0 & \Sigma_{22} & 0 \\
      0 & 0 & m\Omega_0^{-1}
   \end{cmatrix},
\]
where the symmetric positive semidefinite matrix block 
$\Sigma_{22}\in\R^{(N-2d)\times(N-2d)}$ satisfies
\be{Lure-small}
   \Lambda\Sigma_{22} \,+\, \Sigma_{22}\Lambda^T \,=\, -\beta m\,LL^T
   \qquad \text{and} \qquad
   \Sigma_{22} B \,=\, C
\ee
with $B$ and $C$ of \req{Markov1}.
%Moreover, all three solution components $Y(t)$, $Z(t)$, and $X(t)$ are 
%uncorrelated, and
%\be{Prop:Markov1}
%   \E\bigl[X(t)X(t)^T\bigr] \,=\, \frac{1}{\beta}\,\Omega_0^{-1}\,.
%\ee
\end{proposition}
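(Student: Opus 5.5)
Since \req{Markov1} arises from \req{Markov0} by an invertible linear change of the auxiliary variables, its drift matrix $\widetilde A$ is similar to the stable matrix $A$. Hence \req{Markov1} has a unique stationary solution. Its covariance $\widetilde\Sigma$ is the unique solution of the Lyapunov equation
\begin{equation*}
   \widetilde A\widetilde\Sigma \,+\, \widetilde\Sigma\widetilde A^T \,=\, -GG^T,
   \qquad G \,=\, \begin{cmatrix} 0 \\ L \\ L_2 \end{cmatrix}.
\end{equation*}
The $Y$-component is untouched by the transformation, so \req{variances} carries over: the $(1,1)$ block of $\widetilde\Sigma$ is $\frac{1}{\beta m}I$, and $Y(0)$ is uncorrelated with $(Z(0),X(0))$, because these are linear combinations of the components of $Z_0(0)$. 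I therefore write
\begin{equation*}
   \widetilde\Sigma \,=\, \frac{1}{\beta m}
   \begin{cmatrix} I & 0 & 0 \\ 0 & \Sigma_{22} & \Sigma_{23} \\ 0 & \Sigma_{23}^T & \Sigma_{33} \end{cmatrix}
\end{equation*}
and read off the Lyapunov equation block by block, using the explicit block structure of $\widetilde A$ in \req{Markov1}.

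The $(1,1)$ block gives $B^T\Sigma_{23}\cdot 0$-type terms only through the first row. Explicitly, the first block row of $\widetilde A\widetilde\Sigma$ is $(0,\,B^T\Sigma_{22}-\Omega_0\Sigma_{23}^T/m,\,B^T\Sigma_{23}-\Omega_0\Sigma_{33}/m)$. Its $(1,1)$ entry vanishes, which is consistent because $G$ has a zero first block.

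The $(1,3)$ block gives
\begin{equation*}
   B^T\Sigma_{23} \,-\, \Omega_0\Sigma_{33}/m \,+\, I \,=\, 0 .
\end{equation*}
The $(3,3)$ block gives $0 = -\beta m L_2L_2^T$, because the third block row of $\widetilde A$ is $(I,0,0)$ and the $(1,3)$ covariance block is zero. Hence $L_2=0$. The $(2,3)$ block gives $-C\cdot 0+\Lambda\Sigma_{23}+0 = -\beta m LL_2^T = 0$. Since $\Lambda$ is nonsingular (it collects the nonzero eigenvalues of $A_0$), this forces $\Sigma_{23}=0$.

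Feeding $\Sigma_{23}=0$ back into the $(1,3)$ relation gives $\Omega_0\Sigma_{33}=mI$, so $\Sigma_{33}=m\Omega_0^{-1}$. Because $\widetilde\Sigma$ is symmetric positive semidefinite, $\Sigma_{33}$ is too. Being invertible, $\Sigma_{33}$ is positive definite, and therefore $\Omega_0=m\Sigma_{33}^{-1}$ is symmetric positive definite.

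The $(1,2)$ block gives $B^T\Sigma_{22}-\Omega_0\Sigma_{23}^T/m+(-C\cdot 0)^T\ldots$. More carefully, the $(1,2)$ entry of $\widetilde\Sigma\widetilde A^T$ equals the first block row of $\widetilde\Sigma$ times the second block column of $\widetilde A^T$, namely $\frac{1}{\beta m}(-C^T)$. The $(1,2)$ equation therefore reads $B^T\Sigma_{22}-C^T=0$, that is, $\Sigma_{22}B=C$. Finally, the $(2,2)$ block is $\Lambda\Sigma_{22}+\Sigma_{22}\Lambda^T=-\beta m LL^T$, again because the cross terms involve the vanishing $(1,2)$ covariance block. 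Together these establish \req{Lure-small}.

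The main obstacle is the bookkeeping of the block products in the Lyapunov equation, in particular making sure that the vanishing of the $Y$-cross covariances is legitimately inherited from \req{variances} through the transformation. The genuinely substantive ingredients are two: the nonsingularity of $\Lambda$, which Theorem~\ref{Thm:A0} provides via the complementarity of $\krn{A_0}$ and $\rge{A_0}$, and the positivity of $\widetilde\Sigma$.
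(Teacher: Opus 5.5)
Your proposal is correct and follows the paper's proof essentially step by step. You inherit the vanishing $Y$-cross covariances from \req{variances}, then read off the $(3,3)$, $(2,3)$, $(1,3)$, $(1,2)$ and $(2,2)$ blocks of the Lyapunov equation to obtain $L_2=0$, then $\Sigma_{23}=0$ via the nonsingularity of $\Lambda$, then $\Sigma_{33}=m\Omega_0^{-1}$, and finally \req{Lure-small}. Your argument that $\Omega_0$ is symmetric positive definite, because $\Sigma_{33}$ is an invertible principal block of a positive semidefinite covariance matrix, simply spells out the step the paper dismisses with ``clearly''.
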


\begin{proof}
We have already seen that under the given assumptions \req{Markov1} admits
a stationary solution, whose components $Z(t)$ and $X(t)$ are 
uncorrelated with $Y(t)$ because of \req{variances}. Accordingly, the
covariance matrix $\Sigma_0$ of this stationary solution 
of \req{Markov1} is given by
\bdm
   \Sigma_0 \,=\,
   \frac{1}{\beta m}\,
   \begin{cmatrix}
      I & 0 & 0 \\ 0 & \Sigma_{22} & \Sigma_{23} \\
      0 & \Sigma_{23}^T & \Sigma_{33}
   \end{cmatrix}
\edm
for appropriate matrix blocks $\Sigma_{22}$, $\Sigma_{23}$, 
$\Sigma_{33}$,
and the associated Lyapunov equation for $\Sigma_0$ takes the form 
\begin{align*}
   &\begin{cmatrix}
      \,0 & \,B^T & -\Omega_0/m \\ -C\, & \Lambda & \phantom{i}0 \\ 
      \,I & 0 & \phantom{i}0
   \end{cmatrix}
   \begin{cmatrix}
      I & 0 & 0 \\ 0 & \Sigma_{22} & \Sigma_{23} \\
      0 & \Sigma_{23}^T & \Sigma_{33}
   \end{cmatrix}
   \,+\,
   \begin{cmatrix}
      I & 0 & 0 \\ 0 & \Sigma_{22} & \Sigma_{23} \\
      0 & \Sigma_{23}^T & \Sigma_{33}
   \end{cmatrix}
   \begin{cmatrix}
      \,0 & \!\!-C^T & I\, \\ B & \!\!\phantom{-}\Lambda^T\, & 0\, \\ 
      -\Omega_0^T/m & \!\!0 & 0\,
   \end{cmatrix}\\
   &\quad =\, -\beta m
              \begin{cmatrix} 
                 0 & 0 & 0 \\ 0 & LL^T & LL_2^T\\ 0 & L_2L^T & L_2L_2^T
              \end{cmatrix}.
\end{align*}
Evaluating the (3,3) block term on the left-hand side of this equation
it readily follows that $L_2=0$. A subsequent evaluation of the (2,3)
block term yields
\bdm
   \Lambda\Sigma_{23} \,=\, -\beta m\, LL_2^T \,=\, 0\,,
\edm
and since $\Lambda$ is nonsingular, it follows that $\Sigma_{23}=0$.

To determine $\Sigma_{33}$ we compute the (3,1) block identity in the 
Lyapunov equation:
\bdm
   I \,+\, \Sigma_{23}^T B \,-\, \Sigma_{33}\Omega_0^T/m \,=\, 0\,.
\edm
Since we already know that $\Sigma_{23}=0$ we conclude that 
\bdm
   \Sigma_{33} \Omega_0^T \,=\, m\,I\,.
\edm
This implies that $\Sigma_{33}$ and $\Omega_0$ have full rank, and that
\bdm 
   \Sigma_{33} \,=\, m\,\Omega_0^{-T}\,.
\edm
Clearly, $\Omega_0$ must therefore be symmetric and positive definite.
Finally, an evaluation of the (2,2) and (2,1) block terms gives the remaining 
two equations~\req{Lure-small} for $\Sigma_{22}$.
\end{proof}

The second block equation of \req{Markov1} yields the representation
\bdm 
   Z(t) \,=\, e^{t\Lambda} Z(0) \,-\, \int_0^t e^{(t-s)\Lambda}CY(s)\ds
              \,+\, \int_0^t e^{(t-s)\Lambda} L\dW(s)\,, \qquad t\geq 0\,,
\edm
of the auxiliary $Z$-variables of \req{Markov1}.
Inserting this expression into the first block equation of \req{Markov1}
and making use of Proposition~\ref{Prop:Markov1} we conclude that $X$ and $Y$
satisfy the system of stochastic differential equations
\be{GLE-approx}
\begin{aligned}
   \dot{X}(t) &\,=\, Y(t)\,, \\
   m\,\dot{Y}(t) &\,=\, -\Omega_0X(t) 
                  \,-\, \int_0^t m\,B^Te^{(t-s)\Lambda}CY(s)\ds 
                     \,+\, F_0(t)\,,
\end{aligned}
\ee
with
\[
   F_0(t) \,=\, m\,B^Te^{t\Lambda}Z(0) 
                 \,+\, \int_0^t m\,B^Te^{(t-s)\Lambda} L\dW(s)\,.
\]
This system has the same structure as the original model~\req{GLE} that we
have started with, and the memory kernel $\gamma_0$ associated with
\req{GLE-approx} has the form
\bdm
   \gamma_0(t) \,=\, m\,B^Te^{t\Lambda}C\,, \qquad t\geq 0\,,
\edm
which is absolutely integrable whenever $\Lambda$ is a stable matrix;
see Remark~\ref{Rem:Lambda} below. It can further be shown 
(compare, e.g., the computation in \cite[Section~8.2]{Pavl14})
that in this case the fluctuating force $F_0$ is a 
centered stationary Gaussian process with autocorrelation function
\bdm
   C_{F_0}(t) \,=\, \E\bigl[F_0(s+t)F_0(s)^T\bigr] 
   \,=\, \frac{1}{\beta}\,\gamma_0(t)\,, \qquad t\geq 0\,,
\edm
provided that
\bdm
   Z(0) \,\sim\, \Normal\Bigl(0,\frac{1}{\beta m}\,\Sigma_{22}\Bigr)\,,
\edm
independent of the Brownian motion $W$. This should be compared to \req{CF}.

\begin{remark}
\label{Rem:Lambda}
\rm
Recall that the autocorrelation function $C_Y$ of the $Y$-component of the
stationary solution of \req{Markov1} is a function of positive type, 
but beware of the fact that it fails to be of strictly positive type 
under the assumptions of Theorem~\ref{Thm:A0}
%cf., e.g., \cite[Lemma~1.2]{Pavl14}, 
%i.e., $\Chat_Y(\omega)$ is positive semidefinite for every $\omega\in\R$,
%and $\Chat_Y(0)=0$ by virtue of \req{CYintegral}.
%and 
because $\widehat{C}_Y(0)=0$ by virtue of \req{CYintegral}.
%is twice the symmetric part of the 
%(real) integral~\req{CYintegral}, it follows that $\Chat_Y(0) = 0$, 
%showing that $C_Y$ is not of strictly positive type.

In this situation a sufficient condition for $\Lambda$ to be 
stable is that $\Chat_Y(\omega)$
%which is a function of positive type, 
%cf., e.g., Pavliotis~\cite[Lemma~1.2]{Pavl14},
is positive definite for every $\omega\neq 0$. 
A simple argument for this statement is as follows. Assume that $\Lambda$ has
a nonzero eigenvalue $\lambda$ on the imaginary axis or in the open right 
half-plane. Let $z\in\C^{N-d}$ be an associated eigenvector of $A_0$,
and denote the coefficient matrix of \req{Markov0} once again by $A$. 
Then $B_0^Tz$ must be different from zero, for otherwise 
\bdm 
   A \begin{cmatrix} 0 \\ z \end{cmatrix} 
   \,=\, \begin{cmatrix} \,0 & B_0^T \\ -C_0 & A_0\, \end{cmatrix}
         \begin{cmatrix} 0 \\ z \end{cmatrix} 
   \,=\, \begin{cmatrix} B_0^Tz \\ A_0z \end{cmatrix}
   \,=\, \begin{cmatrix} 0 \\ \lambda z \end{cmatrix},
\edm
and hence, $\lambda$ also belongs to the spectrum of $A$; since $A$ is stable,
this is not possible. Accordingly, 
\bdm
   (\lambda I-A) \begin{cmatrix} 0 \\ z \end{cmatrix}
   \,=\, \begin{cmatrix} \lambda I & -B_0^T \\ C_0 & \lambda I-A_0 \end{cmatrix}
         \begin{cmatrix} 0 \\ z \end{cmatrix}
   \,=\, \begin{cmatrix} -B_0^Tz \\ 0 \end{cmatrix},
\edm
and hence, for $y=-B_0^Tz\neq 0$ it follows that
\bdm
   y^*E^T(\lambda I-A)^{-1}E \, y 
   \,=\, y^*E^T (\lambda I-A)^{-1} \begin{cmatrix} -B_0^Tz \\ 0 \end{cmatrix}
   \,=\, y^*E^T \begin{cmatrix} 0 \\ z \end{cmatrix}
   \,=\, 0\,.
\edm
This means that the rational function $f$ defined in \req{f} vanishes at 
$\zeta=\lambda$. As discussed in the proof of Theorem~\ref{Thm:A0}, however,
this rational function cannot have roots in the open right-half plane, 
and it can neither have a root $\zeta=\rmi\omega\neq 0$ on the 
imaginary axis because
\[
   2\,\Real f(\rmi\omega) \,=\,  \beta m\,y^*\Chat_Y(\omega)y \qquad
   \text{for $\omega\in\R$}\,,
\]
cf.~\req{CYhat-kappa}, 
and $\Chat_Y(\omega)$ is assumed to be positive definite for $\omega\neq 0$.
This shows that $\Lambda$ is stable under the given assumption.
\fin
\end{remark}

\begin{remark}
\label{Rem:K}
\rm
A straightforward computation reveals that the inverse of the system
matrix $A$, when in the block form representation given in \req{Markov1}, 
takes the form
\bdm
   A^{-1} \,=\, 
   \begin{cmatrix}
      0 & 0 & I \\
      0 & \Lambda^{-1} & \Lambda^{-1}C \\
      -m\Omega_0^{-1} & m\Omega_0^{-1}B^T\Lambda^{-1} & m\Omega_0^{-1}B^T\Lambda^{-1}C
   \end{cmatrix}.
\edm
It follows that
\bdm
   A^{-1}E \,=\, \begin{cmatrix} 0 \\ 0 \\ -m\Omega_0^{-1} \end{cmatrix},
\edm
and hence,
\bdm
   E^TA^{-2}E \,=\, -m\Omega_0^{-1} \,.
\edm

On the other hand Proposition~\ref{Prop:Markov1} implies that
\bdm
   \int_0^\infty t\,C_Y(t)\dt 
   \,=\, \int_0^\infty t\,E^Te^{tA}\Sigma_0 E\dt
   \,=\, \frac{1}{\beta m}\,E^T\int_0^\infty t\,e^{tA}\dt \,E\,,
\edm
and partial integration thus gives
\bdmal
   \int_0^\infty t\,C_Y(t)\dt 
   &\,=\, \frac{1}{\beta m}\,E^T
          \left(tA^{-1}e^{tA}\Big|_{t=0}^\infty
                \,-\, \int_0^\infty A^{-1}e^{tA}\dt\right)E\\[1ex]
   &\,=\, -\frac{1}{\beta m}\,E^TA^{-2}e^{tA}\Big|_{t=0}^\infty E
    \,=\, \frac{1}{\beta m}\, E^TA^{-2}E\,.
\edmal
We thus conclude that
\[
   \int_0^\infty t\,C_Y(t)\dt \,=\, -\frac{1}{\beta}\,\Omega_0^{-1}\,.
\]
Accordingly, compare~\req{CVMoment1}, if $C_Y$ is a good enough 
approximation of the true autocorrelation function $C_V$ -- 
in the sense that the first moment of $C_Y$ (over the positive time axis) 
agrees well with the corresponding first moment of $C_V$ -- 
then $\Omega_0$ is a good approximation of $\Omega$.

Given the limited amount of data from which we start, this is not a
foregone conclusion. On the other hand, if $\Omega$ is known beforehand 
then it is easy to incorporate this information into the construction of the 
Ornstein-Uhlenbeck system~\req{Markov0}, namely by imposing
%, the same computation as above -- this time based on \req{CY} -- gives
%\be{CYMoment1b}
%   \int_0^\infty t\,C_Y(t)\dt 
%   \,=\, \frac{1}{\beta m}\,E^T\int_0^\infty t e^{tA}\dt E
%   \,=\, \frac{1}{\beta m}\,E^TA^{-2}E\,,
%\ee
%and hence, if we impose that
\be{EA2E}
   E^TA^{-2}E \,=\, -m\Omega^{-1}
\ee
as another constraint, so that $\Omega_0$ and $\Omega$ coincide.
\fin
\end{remark}

\begin{remark}
\label{Rem:fullprocess}
\rm
Take note that $C_{RV}$ is the antiderivative of $C_V$ with $C_{RV}(0)=0$
and $C_R$ is the second antiderivative of $-C_V$ with 
$C_R(0)=(\beta\Omega)^{-1}$. Likewise, $C_{XY}$ is the antiderivative of $C_Y$
with $C_{XY}(0)=0$ and $C_X$ is the second antiderivative of $-C_Y$
with $C_X(0)=(\beta\Omega_0)^{-1}$.
Accordingly, if we achieve a good match $C_Y\approx C_V$, then this will imply
a good match of $C_{XY}$ and $C_{RV}$ and their counterparts $C_{Y\!X}$ 
and $C_{V\!R}$. And this then implies that $C_X$ and $C_R$ are also close to
each other, provided that their initial values are in good agreement; 
otherwise we expect to see a systematic error in $C_R\approx C_X$ of
size $\norm{\Omega^{-1}-\Omega_0^{–1}}/\beta$.
\fin
\end{remark}

\section{Numerical construction of the Ornstein-Uhlenbeck process} 
\label{Sec:Implementation}
%%%%%%%%%%%%%%%%%%%%%%%%%%%%%%%%%%%%%%%%%%%%%%%%%%%%%%%%%%%
We now present the numerical scheme that we use to generate the
system~\req{Markov0} from given samples
\bdm 
   c_\nu \,=\, C_V(\nu\tau)\,, \qquad \nu=0,\dots,n\,,
\edm
for some $\tau>0$ and $n\in\N$ of the true autocorrelation function $C_V$.
As mentioned before, this algorithm is based on a method proposed 
in \cite{BBHSH26}; the differences are the particular side-constraints which
are being used here (i) to cope for the external quadratic potential, 
and (ii) to provide additional safeguards to increase the success rate of the 
overall algorithm.

According to \req{CY} the autocorrelation function of the $Y$-component
of the stationary solution of \req{Markov0} has the form
\be{acf-data}
   C_Y(t) \,=\, \frac{1}{\beta m}\,E^T e^{tA} E 
   \,=:\, \frac{1}{\beta m}\,\varphi(t)\,, \qquad t\geq 0\,.
\ee
When $A$ is diagonalizable with distinct eigenvalues
$\lambda_1,\dots,\lambda_p\in\C^-$, its eigendecomposition gives the 
representation 
\bdm
   A \,=\, \sum_{j=1}^p \lambda_j Q_j\,,
\edm
where $Q_j\in\C^{N\times N}$ are (oblique) projection matrices
onto the associated eigenspaces;
in our case, compare Section~\ref{Subsec:sysmat},
this assumption on $A$ is satisfied by construction, and each eigenspace 
has dimension $d$, generically.
%; some may degenerate to a smaller dimension, though. 
Using the above representation, $\varphi$ can be written as 
a finite Prony series
\be{Prony-series}
   \varphi(t) \,=\, \sum_{j=1}^p \Gamma_j e^{\lambda_j t}\,, \quad t\ge 0\,,
\ee
with coefficient matrices $\Gamma_j=E^TQ_jE\in\C^{d\times d}$;
compare, e.g., Higham~\cite{High08}.
%Take note here that $M$ has real entries, 
%and thus its eigenvalues are either real or come in complex conjugate pairs, 
%and the same applies to the associated matrices $Q_j$ and $\Gamma_j$. 

Since $\varphi(0)=I$ by virtue of \req{acf-data} we conclude
from \req{Prony-series} that
\be{EE-Gamma}
   \sum_{j=1}^p \Gamma_j \,=\, I\,.
\ee
In addition, the constraints \req{EM1E} and \req{EA2E}, which we want to impose
on $A$, translate into further constraints on the coefficient matrices 
$\Gamma_j$ in \req{Prony-series}, namely
\be{EM1E-Gamma}
   \sum_{j=1}^p \lambda_j\Gamma_j \,=\, 0 \qquad \text{and} \qquad
   \sum_{j=1}^p \lambda_j^{-1}\Gamma_j \,=\, 0\,,
\ee
respectively, and
\be{EA2E-Gamma}
   \sum_{j=1}^p \lambda_j^{-2}\Gamma_j \,=\, -m\Omega^{-1}\,.
\ee

In view of \req{acf-data} our goal now is to select a reasonable number 
$p\in\N$ of appropriate exponents $\lambda_j\in\C^{-}$ and 
associated coefficient matrices $\Gamma_j\in\C^{d\times d}$ such 
that $\varphi$ provides a good match of our given data samples, i.e.,
\be{PronyY}
    \varphi(\nu\tau) 
    \,=\, \sum_{j=1}^p \Gamma_j e^{\nu\tau\lambda_j} 
    \,\approx\, \beta m\,c_\nu \,=\, \beta m\,C_V(\nu\tau) 
\ee
for $\nu=0,\dots,n$, while satisfying the constraints \req{EE-Gamma}, 
\req{EM1E-Gamma} and -- subject to the availability of $\Omega$ -- 
\req{EA2E-Gamma}.

%%%%%%%%%%%%%%%%%%%%%%%%%%%%%%%%%%%%%%%
\subsection{Finding suitable exponents}
\label{Subsec:exps}
%%%%%%%%%%%%%%%%%%%%%%%%%%%%%%%%%%%%%%%
To determine appropriate exponents $\lambda_j$ we consider the generating
function
\be{gen-Func}
   F(z) \,=\, \sum_{\nu=0}^{\infty} c_\nu z^{-\nu-1}
\ee
of all equidistant snapshots $c_\nu$ of the underlying 
autocorrelation function $C_V$. This function is analytic in the exterior
of the unit circle, because $C_V$ is a bounded function, and hence
the Laurent coefficients $c_\nu$ of $F$ are also bounded. Let us assume 
for the moment that \req{PronyY} holds true with equality for all $\nu\in\N_0$. 
Then we can insert this in \req{gen-Func} to obtain
\[
\begin{aligned}
   F(z) &\,=\, \frac{1}{\beta m}\sum_{\nu=0}^{\infty} 
              \sum_{j=1}^p \Gamma_j e^{\nu\tau\lambda_j} z^{-\nu-1}
         \,=\, \frac{1}{\beta m}\sum_{j=1}^p \frac{\Gamma_j}{z} 
               \sum_{\nu=0}^\infty (e^{\tau\lambda_j}/z)^\nu\\[1ex]
        &\,=\, \frac{1}{\beta m}\sum_{j=1}^p \frac{\Gamma_j}{z - z_j}
\end{aligned}
\]
with
\bdm
z_j = e^{\tau\lambda_j}\,, \qquad j=1,\ldots,p \,.
\edm
Take note that this ansatz implies that $F$ is a real rational function
%i.e., $F(\overline{z})=\overline{F(z)}$ for all $z\in\C$, 
and that all the poles of $F$ are inside the unit circle.
We therefore proceed by approximating the generating function in the exterior 
of the unit disk by a rational function with the same two properties, using
the AAA algorithm by Nakatsukasa, S\`{e}te, and Trefethen~\cite{NST18} 
for this purpose;
more precisely, since $F$ is matrix-valued, we use a matrix-valued variant 
of this algorithm which we borrow from Gosea and Güttel~\cite{GG21}.

To be specific, let $\mathcal{Z}$ be the set of all points of an equiangular 
grid with an even number of grid points on a circle $|\zeta| = \rho > 1$ 
in the complex plane, with the two real grid points at $\pm\rho$ as 
anchor points. 
The AAA algorithm determines a rational function in barycentric form
\be{Bary}
   r(z) \,=\, \sum_{l=1}^p \frac{w_l F_l}{z - \zeta_l} \Big/ \sum_{l=1}^p 
   \frac{w_l}{z - \zeta_l} \,,
\ee
where $\zeta_l \in\mathcal{Z}$ are the so-called support 
points, and $w_l$ are suitable complex (scalar) weights. 
In the (generic) case that all weights are nonzero, the function \req{Bary} 
interpolates the given values $F_l\in\C^{d\times d}$ at all the support points, 
i.e.,
\bdm
   r(\zeta_l) \,=\, F_l\,, \qquad l=1,\ldots,p \,.
\edm
Therefore, in order to approximate $F$ of \req{gen-Func} we choose
\be{Fk}
   F_l \,=\, \frac{1}{\beta m}\sum_{\nu=0}^{n} c_\nu \zeta_l^{-\nu-1} 
   \,\approx\, F(\zeta_l) \qquad \text{for} \ \zeta_l\in\mathcal{Z}\,,
\ee
and we mention that the quality of this approximation depends on the size
of $\rho$ and the corresponding truncation error on the one hand, and on the 
magnitude of any noise in the data on the other; ideally the choice of $\rho$ 
is such that these two error components balance each other; 
see \cite{GrHa25} for a more detailed discussion of this matter.
Note that $F_l=\overline{F}_{l'}$ when $\zeta_l=\overline{\zeta}_{l'}$,
and hence, $r$ is real, if the support points are real or come in complex 
conjugate pairs, and if the associated weights have the same property;
we refer to \cite[Appendix~B.1]{BBHSH26} for a description how we determine
appropriate weights with this property.

We initialize the list of support points with the two real grid points
$\zeta_1 = \rho$ and $\zeta_2 = -\rho$, and then the AAA algorithm
proceeds with a greedy iterative scheme: in the $l$-th iteration, $l=1,2,\dots$,
two further grid points $\zeta_{2l+1}$ and $\zeta_{2l+2}=\overline\zeta_{2l+1}$ 
are added to the list of support points, 
where $\zeta_{2l+1}\in\mathcal{Z}$ is such that the Frobenius norm 
of the current residual $\norm{F_{2l+1} - r(\zeta_{2l+1})}_F$ is maximal,
%, and added to the list of support points; further. 
%the weights $w_k$ are chosen
%by minimizing a linearized fit 
%\bdm
%   \sum_{l=m+1}^M 
%  \left\Vert \sum_{k=1}^m w_k \frac{F_l - F_k}{\zeta_l - \zeta_k} \right\Vert_F
%\edm
%of the remaining points in $\mathcal{Z}$,
%subject to the constraint that the weight vector $w = [w_1,\ldots,w_m]^T$ has %unit Euclidean norm. The solutions of this minimization problem are
%singular vectors associated with the smalles singular value of the Loewner
%matrix\bdm
%L = \begin{cmatrix}
%        \frac{F^{m+1}_{1,1} - F^{1}_{1,1}}{\zeta_{m+1} - \zeta_1} & \dots & 
%        \frac{F^{m+1}_{1,1} - F^m_{1,1}}{\zeta_{m+1} - \zeta_m}\\
%        \vdots & & \vdots\\
%        \frac{F^M_{1,1} - F^1_{1,1}}{\zeta_{M} - \zeta_1} & \ldots & 
%        \frac{F^M_{1,1} - F^m_{1,1}}{\zeta_{M} - \zeta_m}\\[1ex]
%        \frac{F^{m+1}_{2,1} - F^1_{2,1}}{\zeta_{m+1} - \zeta_1} & \dots &
%        \frac{F^{m+1}_{2,1} - F^m_{2,1}}{\zeta_{m+1} - \zeta_m}\\
%        \vdots & & \vdots\\
%        \frac{F^M_{2,1} - F^1_{2,1}}{\zeta_{M} - \zeta_1} & \ldots & 
%        \frac{F^M_{2,1} - F^m_{2,1}}{\zeta_{M} - \zeta_m}\\
%        \vdots & & \vdots\\
%        \frac{F^{m+1}_{d,d} - F^1_{d,d}}{\zeta_{m+1} - \zeta_1} & \dots & 
%        \frac{F^{m+1}_{d,d} - F^m_{d,d}}{\zeta_{m+1} - \zeta_m}\\
%        \vdots & & \vdots\\
%        \frac{F^M_{d,d} - F^1_{d,d}}{\zeta_{M} - \zeta_1} & \ldots & 
%        \frac{F^M_{d,d} - F^m_{d,d}}{\zeta_{M} - \zeta_m}\\
%    \end{cmatrix} ,
%\edm
%where $F_{ij}^l$ stands for the $(i,j)$-matrix entry of $F_l$.
unless the residual norms $\norm{F_i - r(\zeta_i)}_F$
of all grid points $\zeta_i\in\mathcal{Z}$ are 
below a certain tolerance $\eps>0$ and a minimum number of admissible poles 
has been reached, in which case we stop the iteration.
Note that the minimum number of admissible poles depends on the number 
of side constraints, see Proposition~\ref{Prop:sdp} below. 

Once the iteration has been terminated we solve a generalized eigenproblem 
to compute the $p-1$ poles of the rational approximation~\req{Bary}, 
cf.~\cite{NST18}.
%which are given as the $m-1$ finite eigenvalues $z$ of the $(m+1)\times(m+1)$ 
%generalized eigenvalue problem
%\bdm
%\begin{cmatrix} 0 & w_1 & w_2 & \ldots & w_m\\ 1 & z_1 &&&\\ 1 && z_2 && \\ 
%\vdots &&& \ddots & \\ 1 &&&& z_m\end{cmatrix} = z \begin{cmatrix}
%        \,0 & \phantom{w_1} & \phantom{w_2} && \\ 
%        & 1 &&& \\ && 1 && \\ &&& \ddots & \\ &&&& 1\\
%    \end{cmatrix} \,.
%\edm
%Take note here that two further generalized eigenvalues of this problem are 
%infinite \textit{per se}. 
For every pole $z\neq 0$ with $|z|<1$ the exponent(s)
\be{logpole}
\lambda = \begin{cases}
        \displaystyle \frac{1}{\tau} \log z\,, & z\notin\R^{-}\,,\\[2ex]
        \displaystyle \frac{1}{\tau} \log |z| \pm \rmi\,\frac{\pi}{\tau}\,, & z\in\R^{-}\,, 
    \end{cases}
    \ \qquad |\Imag\lambda|\,\leq\, \frac{\pi}{\tau}\,,
\ee
are used for the Prony series ansatz~\req{Prony-series}, 
while spurious poles outside the unit disk and in the origin 
are ignored, because they are caused by
artefacts of the rational approximation. 
Since $r$ is real, the resulting exponents are either real or come in 
complex conjugate pairs. 

Take note that due to the definition \req{logpole} and due to spurious 
poles the value of $p$ in \req{Bary} may slightly differ from
the parameter $p$ in \req{Prony-series}, i.e., the number of terms of the
Prony series. To enhance readability we nevertheless took the liberty 
to use the same letter here and there.

%%%%%%%%%%%%%%%%%%%%%%%%%%%%%%%%%%%%%%%%%%%%%%%%%%
\subsection{Finding suitable coefficient matrices}
\label{Subsec:coeff}
%%%%%%%%%%%%%%%%%%%%%%%%%%%%%%%%%%%%%%%%%%%%%%%%%%
In a second step of our algorithm we determine coefficient matrices 
$\Gamma_j\in\C^{d\times d}$ for the Prony series $\varphi$ of 
\req{Prony-series} by minimizing the data fit
\[
   \sum_{\nu=0}^{n} \bigl\|\varphi(\tau\nu) - \beta m\,c_\nu \bigr\|_F^2
\]
in \req{PronyY}, subject to the aforementioned constraints \req{EE-Gamma} 
and \req{EM1E-Gamma}, together with \req{EA2E-Gamma} in case the stiffness 
matrix $\Omega$ is to be incorporated.

In view of our ultimate goal \req{acf-data} that $\varphi$ coincides 
-- up to the scalar factor $\beta m$ -- with the autocorrelation function 
of a stochastic process $Y$ on the positive time axis, 
we have to make sure that $\varphi$, when extended by
\be{phi-ext}
   \varphi(t) \,=\, \varphi(-t)^T\,, \qquad t<0\,,
\ee
to the negative time axis, is a function of positive type.
%respect Bochner's theorem which states that the Fourier transform 
%of every autocorrelation function is positive semidefinite for every 
%frequency $\omega\in\R$, cf., e.g., \cite{Pavl14}. 
%This requires that the Fourier transform of $\varphi$, given by
%\req{Prony-series} for $t\geq 0$, and extended by
%, must be positive semidefinite. 
A straightforward computation based on \req{Prony-series} shows that
\[
    \widehat{\varphi}(\omega) 
    \,=\, \sum_{j=1}^p 
          \Bigl(\frac{\Gamma_j}{\rmi\omega - \lambda_j}   
                \,-\, \frac{\Gamma_j^T}{\rmi\omega+\lambda_j}
          \Bigr)\,, \qquad \omega\in\R\,,
\]
and hence, this expression has to be positive semidefinite for every 
$\omega\in\R$. In order to make this side condition numerically tractable we 
limit our attention to the (a priori known) zero eigenvalues of 
$\phihat$ at $\omega=0$ and $\omega=\infty$. For high frequencies 
$\phihat(\omega)$ can be expanded into the convergent Laurent series
\bdm
   \widehat{\varphi}(\omega) \,=\, \sum_{k=0}^{\infty} \YG_k \omega^{-k-1}
\edm
with selfadjoint coefficient matrices
\bdm
   \YG_k \,=\, (-\rmi)^{k+1}\sum_{j=1}^p 
          \lambda_j^k\Bigl(\Gamma_j + (-1)^{k+1} \Gamma_j^T\Bigr)\,,
%   &\,=\, (-\rmi)^{n+1}
%          \Bigl(\varphi^{(n)}(0+) \,+\, (-1)^{n+1}\varphi^{(n)}(0+)^T\Bigr)\,,
    \qquad k\in\N_0\,.
\edm
Accordingly, a necessary condition for $\phihat(\omega)$ to be positive 
semidefinite for large frequencies is that
the first non-vanishing term of this Laurent series is
positive semidefinite. Note that $\YG_0=\YG_1=0$ according to \req{EE-Gamma} 
and \req{EM1E-Gamma}, respectively. As a consequence, if $\YG_2$ happens to 
have a nonzero eigenvalue, then $\phihat(\omega)$ exhibits 
a negative eigenvalue when $\omega$ is large and has the appropriate sign. 
%\bdm
%   \YG_2/\rmi \,=\, \varphi''(0+) \,-\, \varphi''(0+)^T
%\edm
%is a skew-symmetric real matrix; therefore $\YG_2$ is semidefinite, 
%if and only if it is the zero matrix, i.e., if and only if $\varphi''(0+)$ 
%is symmetric. Making use of \req{Prony-series} again, we 
We therefore add the requirements that
\[
   \YG_2/\rmi \,=\, \sum_{j=1}^p \lambda_j^2(\Gamma_j - \Gamma_j^T) \,=\, 0
   \qquad \text{and} \qquad
   \YG_3 \,=\, \sum_{j=1}^p \lambda_j^3(\Gamma_j + \Gamma_j^T)
         \,\succcurlyeq\, 0 
\]
to our list of constraints.

Likewise, near the origin $\phihat(\omega)$ can be expanded into the
Taylor series 
\bdm
   \widehat{\varphi}(\omega) \,=\, \sum_{k=0}^{\infty} \PsiG_k \omega^{k}
\edm
with 
\bdm
   \PsiG_k \,=\, -\rmi^k\sum_{j=1}^p 
                 \lambda_j^{-k-1}\Bigl(\Gamma_j + (-1)^k\Gamma_j^T\Bigr)\,,
   \qquad k\in\N_0\,.
\edm
Here, $\PsiG_0=\PsiG_1=0$ because of \req{EM1E-Gamma} and the fact that
\bdm
   \sum_{j=1}^p \Gamma_j\lambda_j^{-2} 
   \,=\, \beta m\int_0^\infty t\,C_Y(t)\dt \,=\, -m \Omega_0^{-1}\,,
\edm
compare Remark~\ref{Rem:K}, with $\Omega_0$ being symmetric 
according to Proposition~\ref{Prop:Markov1}. We therefore impose 
the additional constraint that
\[
   \PsiG_2 \,=\, \sum_{j=1}^p \lambda_j^{-3}\bigl(\Gamma_j + \Gamma_j^T)
   \,\succcurlyeq\, 0\,. 
\]

In summary, the optimization problem we consider takes the following form:
%\be{sdp-problem}
%\begin{aligned}
%    &\mbox{Minimize} && 
%    \sum_{\nu=0}^{2n-1} \left\Vert 
%       \sum_{j=1}^m \Gamma_j e^{\lambda_j\tau\nu} - Y_\nu \right\Vert_F\\
%    &\mbox{subject to} && \sum_{j=1}^m \Gamma_j \,=\, \frac{1}{\beta m}\,I\,, 
%    && \sum_{j=1}^m \lambda_j\Gamma_j \,=\, 0\,,
%    && \sum_{j=1}^m \frac{1}{\lambda_j}\,\Gamma_j \,=\, 0\,,\\
%    & && \sum_{j=1}^m \lambda_j^2(\Gamma_j - \Gamma_j^T) \,=\, 0\,, 
%    && \sum_{j=1}^m \lambda_j^3(\Gamma_j + \Gamma_j^T)\,\succcurlyeq\, 0 \,, \\
%    &\mbox{and} && 
%   \sum_{j=1}^m \frac{1}{\lambda_j^2}\,\Gamma_j\,=\,-\frac{1}{\beta}\,\Omega^{-1}\,,
%    && \mbox{when available}\,.
%\end{aligned}
%\ee
Find coefficient matrices $\Gamma_j\in\C^{d\times d}$, $j=1,\dots,p$,
which minimize
\begin{subequations}
\label{eq:sdp-problem}
\be{sdp-objective}
    \sum_{\nu=0}^{n} 
       \Biggl\Vert 
          \sum_{j=1}^p \Gamma_j e^{\lambda_j\tau\nu} - \beta m\,c_\nu
       \Biggr\Vert_F^2,
\ee
subject to the equality constraints
\be{sdp-constraints}
\begin{aligned}
    \sum_{j=1}^p \Gamma_j &\,=\, I\,, 
    & \quad \sum_{j=1}^p \lambda_j\Gamma_j &\,=\, 0\,,\\
    \sum_{j=1}^p \frac{1}{\lambda_j}\,\Gamma_j \,&=\, 0\,,
    & \quad \sum_{j=1}^p \lambda_j^2(\Gamma_j - \Gamma_j^T) &\,=\, 0\,, 
\end{aligned}
\ee
the semidefiniteness constraints
\be{sdp-semidefconstraints}
   \sum_{j=1}^p \lambda_j^3(\Gamma_j + \Gamma_j^T)\,\succcurlyeq\, 0 \,, 
   \qquad
   \sum_{j=1}^p \lambda_j^{-3}\bigl(\Gamma_j + \Gamma_j^T)
   \,\succcurlyeq\, 0\,,
\ee
and the additional equality constraint
\be{sdp-optional} 
    \phantom{xxxxx}
    \sum_{j=1}^p \lambda_j^{-2}\Gamma_j\,=\,-\frac{1}{\beta}\,\Omega^{-1}\,,
    \qquad \mbox{when $\Omega$ is prescribed}\,.
\ee
\end{subequations}

\begin{proposition}
\label{Prop:sdp}
Assume that the number $p$ of terms of the 
Prony series~\req{Prony-series} satisfies $7\leq p \leq n+1$.
Then the optimization problem~\req{sdp-problem} has a solution 
$\{\Gamma_j\}_{j=1}^p\subset\C^{d\times d}$, and
this solution gives rise to a real-valued Prony series~\req{Prony-series}.
This solution is unique if 
%at most one negative pole is present in 
the definition~\req{logpole} of the exponents of the Prony series
involves not more than one negative pole of the rational function.
If there is more than one negative pole in \req{logpole} then there
may be infinitely many real-valued Prony series~\req{Prony-series} 
whose coefficient matrices solve the optimization problem~\req{sdp-problem}; 
any two such solutions differ by a term of the form
\be{Prop:sdp}
   \psi(t)\sin(\pi t/\tau)\,, \qquad t\geq 0\,,
\ee
which vanishes at all grid points, where snapshots of $C_V$ have been taken.
The factor $\psi$ in \req{Prop:sdp} is a linear combination of 
exponentials $e^{\mu_k t}$ with negative parameters $\mu_k$ and coefficient
matrices in $\R^{d\times d}$.
\end{proposition}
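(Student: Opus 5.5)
The optimization problem~\req{sdp-problem} is a convex quadratic program in the unknowns $\Gamma_1,\dots,\Gamma_p$, with affine equality constraints and two linear matrix inequality constraints. I would therefore prove the proposition in three steps: existence via feasibility plus coercivity, reality of the minimizing Prony series, and uniqueness by identifying the kernel of the data-fit map.

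\textbf{Existence.} First I show that the feasible set is nonempty. The equality constraints \req{sdp-constraints} and, when $\Omega$ is prescribed, \req{sdp-optional} are at most four or five linear conditions of the form $\sum_j \mu_j^{(k)}\Gamma_j=\text{given}$. The powers involved are $\lambda_j^{0},\lambda_j^{1},\lambda_j^{-1},\lambda_j^{2}$ (the latter only on the skew part) and $\lambda_j^{-2}$.

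Multiplying by a suitable power of $\lambda_j$ turns these into Vandermonde-type rows $\lambda_j^{k}$, $k=-2,\dots,2$. Since the $\lambda_j$ are distinct, these rows are linearly independent once $p\geq 5$. The remaining freedom can then be used to make $\sum\lambda_j^{\pm3}(\Gamma_j+\Gamma_j^T)$ positive semidefinite: I would add the powers $\lambda_j^{\pm3}$ as two further independent Vandermonde rows and prescribe them, for instance, to equal zero. This is where the hypothesis $p\geq 7$ enters.

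Next, coercivity. The objective $\sum_\nu\|V(\Gamma)_\nu-\beta m c_\nu\|_F^2$ involves the matrix $(e^{\lambda_j\tau\nu})_{\nu,j}$, which is a Vandermonde matrix in the nodes $z_j=e^{\tau\lambda_j}$ of size $(n+1)\times p$ with $p\leq n+1$. It has full column rank whenever the $z_j$ are distinct. In that case the objective is strictly convex and coercive, and the feasible set is closed and convex, so a unique minimizer exists.

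\textbf{Reality.} The exponents are closed under conjugation and the data $c_\nu$ are real. Hence if $\{\Gamma_j\}$ is feasible and optimal, so is the conjugated and re-indexed family $\{\overline{\Gamma_{j'}}\}$ with $\lambda_{j'}=\overline{\lambda_j}$: conjugation maps each constraint to itself, using that conjugating a PSD Hermitian matrix keeps it PSD. By uniqueness the two families coincide, so $\varphi$ is real. In the nonunique case I instead average the two: by convexity the average is again optimal, and it is real.

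\textbf{Uniqueness and the main obstacle.} The delicate point is that distinct $\lambda_j$ need not give distinct $z_j$. By \req{logpole}, a negative pole $z$ produces two exponents $\frac1\tau\log|z|\pm\rmi\pi/\tau$ with the same $z_j$. With exactly one negative pole only the pair $\lambda,\overline\lambda$ shares a node. The Vandermonde matrix then has a one-dimensional kernel direction per matrix entry, namely $\Gamma_{\lambda}=-\Gamma_{\overline\lambda}$. I have to show that the equality constraints kill this direction. Perturbing $\Gamma_\lambda$ by $+G$ and $\Gamma_{\overline\lambda}$ by $-G$ changes $\sum\Gamma_j$ by zero and changes $\sum\lambda_j\Gamma_j$ by $(\lambda-\overline\lambda)G=2\rmi\pi G/\tau$. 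The second constraint therefore forces $G=0$, and the strictly convex objective restricted to the feasible set gives uniqueness.

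With $q\geq2$ negative poles $z^{(k)}=-e^{\tau\mu_k}$, $\mu_k<0$, the kernel has $q$ such directions $G_k$. The constraints impose only finitely many linear conditions on $(G_1,\dots,G_q)$. Those conditions are the real-power sums $\sum_k e^{\mu_k\cdot}$-type Vandermonde relations in $\mu_k\pm\rmi\pi/\tau$, restricted to the semidefinite constraints being preserved. For $q$ large enough relative to these conditions they leave a nontrivial affine family of minimizers. The difference of any two solutions is
\[
\sum_k G_k\bigl(e^{(\mu_k+\rmi\pi/\tau)t}-e^{(\mu_k-\rmi\pi/\tau)t}\bigr)=2\rmi\sum_k G_k e^{\mu_k t}\sin(\pi t/\tau).
\]
Reality of both series forces $2\rmi G_k\in\R^{d\times d}$, which gives the form \req{Prop:sdp} with $\psi(t)=2\rmi\sum_k G_ke^{\mu_kt}$. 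Verifying carefully that the feasibility construction and the PSD constraints survive this degeneracy is where I expect the most bookkeeping.
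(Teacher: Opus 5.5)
Your proposal is correct and follows essentially the same route as the paper. Feasibility comes from replacing the skew and semidefinite constraints by Vandermonde-type equalities in the powers $\lambda_j^{-3},\dots,\lambda_j^{3}$ (the paper's set $\mathcal{C}_0$), reality from the conjugated and re-indexed family, and uniqueness from the observation that two minimizers can differ only in directions vanishing on the grid, i.e.\ in paired exponents $\mu_k\pm\rmi\pi/\tau$; the constraint $\sum_j\lambda_j\Gamma_j=0$ then eliminates a single such pair, which the paper phrases as $\varphi_2'(0)-\varphi_1'(0)=\frac{\pi}{\tau}\psi(0)=0$.

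Your coercivity argument via the injective Vandermonde matrix in the nodes $z_j$ is more careful than the paper's bare appeal to convexity for existence. However, like the paper, you give no argument that the minimum is attained when two or more negative poles produce kernel directions that the equality constraints do not remove.
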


\begin{proof}
The constraints \req{sdp-constraints}, \req{sdp-semidefconstraints}, 
and \req{sdp-optional} define a closed convex set~$\mathcal{C}$.
To see that $\mathcal{C}$ is nonempty, consider the subset 
$\mathcal{C}_0\subset\mathcal{C}$, which is obtained, when replacing the
last constraint in \req{sdp-constraints} and the two semidefiniteness
constraints~\req{sdp-semidefconstraints} by the more restrictive 
equality constraints
\bdm
   \sum_{j=1}^p \lambda_j^2\Gamma_j \,=\, 0\,, \qquad
   \sum_{j=1}^p \lambda_j^3\Gamma_j \,=\, 0\,,
   \qquad \text{and} \qquad
   \sum_{j=1}^p \lambda_j^{-3}\Gamma_j \,=\, 0\,.
\edm
The resulting seven constraints are inhomogeneous linear equality contraints of 
Vandermonde type, which have a solution, i.e., $\mathcal{C}_0\neq\emptyset$,
if the rectangular transposed Vandermonde matrix
\bdm
   \begin{cmatrix}
      1 & \lambda_1 & \cdots & \lambda_1^6\\
      1 & \lambda_2 & \cdots & \lambda_2^6\\
      \vdots & \vdots & & \vdots\\
      1 & \lambda_p & \cdots & \lambda_p^6
   \end{cmatrix} \,\in\,\C^{p\times 7}
\edm
has a trivial null space, and this is the case whenever $p\geq 7$.

The objective function~\req{sdp-objective} is convex on $\mathcal{C}$.
Therefore the minimization problem~\req{sdp-problem} always has a 
solution if $p\geq 7$.
Let $\{\Gamma_j\}$ be such a solution, and define another set of coefficient 
matrices $\{\widetilde\Gamma_j\}$ via
\bdm
   \widetilde\Gamma_j \,=\, \overline\Gamma_{j'}\,, \quad
   \text{where \ $\lambda_{j'}=\overline\lambda_{j}$}\,; \qquad j=1,\dots,p\,.
\edm
Then it is easy to see that $\{\widetilde\Gamma_j\}$ also belongs to 
$\mathcal{C}$, and that the objective function attains the same minimal value 
for this second set of coefficient matrices. Due to convexity the same is
true for the coefficient matrices $\{(\Gamma_j+\widetilde\Gamma_j)/2\}$,
for which the associated Prony series is real-valued.

Two different real-valued Prony series, whose coefficient 
matrices solve \req{sdp-problem}, must differ by a function which vanishes 
at all grid points $\nu\tau$, $\nu=0,\dots,n$, 
Since $p\leq n+1$ such a 
situation can only occur when some of the exponents of the two Prony series 
have an imaginary part $\pm\pi/\tau$; compare, for example, 
\cite{PPST18} in combination with \cite[Remark~5.2]{Hank24}.
According to \req{logpole} this (only)
happens when the associated poles of the rational function are negative. 
This shows that if there are two Prony series $\varphi_1$ and $\varphi_2$ of 
this sort then $\varphi_2-\varphi_1$ is a function of the form~\req{Prop:sdp}, 
where $\psi$ is a linear combination of exponentials $e^{\mu_kt}$
and the exponents $\mu_k$ are the real parts of the corresponding exponents 
$\lambda_k$ of the Prony series. From this it follows that
\bdm
   \varphi_2'(0) \,-\, \varphi_1'(0) \,=\, \frac{\pi}{\tau}\,\psi(0)\,,
\edm
and hence, $\psi(0)=0$ due to the second constraint in \req{sdp-constraints}.
If there is only one negative pole then 
\bdm
   \psi(t) \,=\, \psi(0) e^{\mu t}\,, \qquad t\geq 0\,,
\edm
for some fixed value of $\mu$, and therefore \req{Prop:sdp} vanishes
identically, showing that the minimizer of \req{sdp-problem} is unique in
this case.
\end{proof}

\begin{remark}
\label{Rem:p}
\rm
The lower bound $p\geq 7$ is a conservative estimate of the necessary
number of terms to guarantee feasibility of the list of constraints.
For example, if $d=1$ then the equality constraints in
\req{sdp-constraints} amount to three scalar equations in total, 
because the last one is always true; 
together with \req{sdp-optional} this gives four scalar equations, 
which can all be satisfied as soon as $p\geq 4$.
For higher dimensions $p=5$ terms of the Prony series are always sufficient
to match all the equality constraints in \req{sdp-constraints} and 
\req{sdp-optional}, and it is conceivable that
%. With a little bit of luck \textcolor{red}{klingt ein bisschen salop}, 
the semidefiniteness
constraints in \req{sdp-semidefconstraints} are also satisfied for the 
corresponding parameters.
\fin
\end{remark}

In our code we solve \req{sdp-problem} by first ignoring the 
semidefiniteness constraints \req{sdp-semidefconstraints},
and solving the resulting linearly constrained least-squares problem by 
classical linear algebra techniques, cf., e.g., Bj\"orck~\cite{Bjor24}. 
If the corresponding solution happens to satisfy \req{sdp-semidefconstraints} 
then this is the solution of the full problem~\req{sdp-problem}. 
%Otherwise we use a common technique to rewrite the 
%minimization problem~\req{sdp-problem} as a semidefinite program,
%cf., e.g., Vandenberghe and Boyd~\cite{VB96}, 
%and use the software package {\sc cvx} \cite{CVX, GB08} for its solution; 
%the latter implements interior point methods for solving convex optimization 
%problems.
Otherwise we add the constraints from \req{sdp-semidefconstraints}
which are violated by this approximation,
and use a common technique to rewrite the resulting minimization problem as a 
semidefinite program, cf., e.g., Vandenberghe and Boyd~\cite{VB96}. Then we 
use the software package {\sc cvx} \cite{CVX,GB08} 
%for its solution \textcolor{red}{vielleicht eher to solve the  problem oder 
to obtain a solution; {\sc cvx} employs
%the latter implements 
interior point methods for solving convex optimization problems. If 
the corresponding solution does still not satisfy \req{sdp-semidefconstraints}, 
we then solve the full problem~\req{sdp-problem} accordingly.

%%%%%%%%%%%%%%%%%%%%%%%%%%%%%%%%%%%%%%%%%%%%%%%%
\subsection{Construction of the Langevin system}
\label{Subsec:sysmat}
%%%%%%%%%%%%%%%%%%%%%%%%%%%%%%%%%%%%%%%%%%%%%%%%
In order to achieve the desired representation \req{acf-data} for a
particular function $\varphi$ given by \req{Prony-series} we have to 
determine coefficient matrices $A$ and $L_0$ of a suitable 
Langevin system~\req{Markov0}. 

For the construction of the (stable diagonalizable)
drift matrix $A$ we refer to the algorithm described in detail 
in \cite[Appendix~B.2]{BBHSH26}. The dimension of $A$ is given by
\bdm
   N \,=\, \sum_{j=1}^p \operatorname{rank}(\Gamma_j) \,\leq\, d\cdot p\,,
\edm
where the inequality arises because
%the inequality being due to the fact \textcolor{red}{klingt unschön} that 
singular subspaces associated 
with any null space components of the coefficient matrices $\Gamma_j$ are 
discarded. This is necessary to ensure that the realization $(A,E,E)$ 
of the transfer function $\kappa$ of \req{kappa} is minimal 
-- using the language of system theory, cf.~\cite[Section~3.4]{AV73}.
In this case the Positive Real Lemma~\cite[Chapter~5]{AV73} 
states that  if and only if the transfer function is 
positive real, the (singular) Lur'e equations
\be{singLure}
   AP \,+\, PA^T \,=\, -HH^T\,, \qquad PE \,=\, E\,,
\ee
have real solution matrices $P$ and $H$, where $P\in\R^{N\times N}$ is
symmetric positive definite.
On the other hand the transfer function is positive real, 
if and only if the extension~\req{phi-ext} of $\varphi$ 
is a function of positive type; compare~\req{acf-data} and \req{CYhat-kappa}. 
In other words, the Lur'e equations have a solution,
if and only if the semidefiniteness constraints~\req{sdp-semidefconstraints},
which we have imposed on the coefficients of $\varphi$,
have been sufficient to determine an admissible
Prony series approximation \req{Prony-series} of the given data.
Moreover, cf.~\cite[Section~5.5]{AV73}, if this happens to be the case
then there is a solution of \req{singLure} with $H\in\R^{N\times d}$,
and due to the zero upper left block of $A$ this matrix must have the form
\bdm
   H \,=\, \sqrt{\beta m}\,\begin{cmatrix} 0 \\ L_0 \end{cmatrix}
\edm
for some $L_0\in\R^{(N-d)\times d}$. 
When this particular $L_0$ is used in \req{Markov0},
then $P/(\beta m)$ is the covariance matrix of the stationary 
solution of \req{Markov0}, and the autocorrelation function $C_Y$ of the 
$Y$ component of this solution coincides with $\varphi/\beta m$ as desired.

As shown in Section~\ref{Sec:Markov1}, a suitable variable transformation
turns the matrix $A$ into the equivalent form given in \req{Markov},
%matrix of the equivalent system~\req{Markov}, which then takes the form
%\be{Markov2}
%   \rmd\begin{cmatrix} Y\\ Z\\ X \end{cmatrix} 
%   \,=\, \begin{cmatrix}
%            \,0 & \,B^T & -\Omega_0/m \\ -C\, & \Lambda & \phantom{i}0 \\ 
%            \,I & 0 & \phantom{i}0
%         \end{cmatrix}
%         \begin{cmatrix} Y\\ Z\\ X \end{cmatrix}\!\dt
%         \,+\, \begin{cmatrix} 0 \\ L \\ 0 \end{cmatrix}\!\dW\,,
%\ee
and the stationary solution of \req{Markov} provides approximations of 
the velocity and the position of the macroparticle.
Note that, as shown in Proposition~\ref{Prop:Markov1},
the Lur'e equations~\req{singLure} are equivalent to the reduced 
Lur'e equations~\req{Lure-small}. Rather than \req{singLure} we therefore
solve the latter ones for the remaining entry $L$ of the 
fluctuating force term in \req{Markov}, together with
the covariance matrix $\Sigma_{22}$ of the auxiliary variables $Z$.

The numerical solution of the singular Lur'e equations is rather involved 
and has been treated elsewhere. We refer to 
Wang, Speyer, and Weiss~\cite{WSW90}, for example; 
see also \cite[Appendix~C]{BBHSH26}.
%for the generic case that $\varphi''(0+)$ 
%and $\YG_3$ of \req{M2M3} are positive definite.

%%%%%%%%%%%%%%%%%%%%%%%%%%%%%%%%%%%%%%%%%%%%%%%%
\subsection{Position autocorrelation data}
\label{Subsec:PACF}
%%%%%%%%%%%%%%%%%%%%%%%%%%%%%%%%%%%%%%%%%%%%%%%%
Finally, we briefly comment on the case that only position autocorrelation
data 
\bdm
   C_R(\nu\tau)\,, \qquad \nu=0,\dots,n\,,
\edm
are given, in which case $\Omega=(\beta C_R(0))^{-1}$ is always known; 
compare Remark~\ref{Rem:MLL16}. 
As follows from the discussion in Remark~\ref{Rem:fullprocess},
the autocorrelation functions $C_Y$ and $C_X$ of our searched-for 
Ornstein-Uhlenbeck process~\req{Markov} satisfy
\be{CXpp}
   C_X''(t) \,=\, -C_Y(t)\,.
\ee
Accordingly, if $C_Y$ is given by \req{acf-data}, \req{Prony-series}, 
with the constraints \req{EE-Gamma}, \req{EM1E-Gamma}, and \req{EA2E-Gamma}
being satisfied, then
\be{CX-Prony}
   C_X(t) \,=\, -\frac{1}{\beta m}\, 
                \sum_{j=1}^p \lambda_j^{-2}\Gamma_j e^{\lambda_jt}\,, \qquad
   t \geq 0\,.
\ee
We therefore replace the generating function~\req{gen-Func} by
\bdm
   F(z) \,=\, \sum_{\nu=0}^\infty C_R(\nu\tau)\,z^{-\nu-1}\,,
\edm
use the AAA algorithm as described in Section~\ref{Subsec:exps}
to determine a rational approximation of this function on a circle 
$|z|=\rho>1$, and take its poles to define the exponentials of the Prony
series~\req{Prony-series}; cf.~\req{logpole}.

In view of \req{CX-Prony} we then determine the coefficients $\Gamma_j$
in \req{Prony-series} as in Section~\ref{Subsec:coeff}, replacing the
objective function~\req{sdp-objective} by
\bdm
    \sum_{\nu=0}^{n} 
       \Biggl\Vert 
          \sum_{j=1}^p \lambda_j^{-2}\Gamma_j e^{\lambda_j\tau\nu} 
          + \beta m\,C_R(\tau\nu)
       \Biggr\Vert_F^2.
\edm
Beware of the fact that the two terms in the norm are now coupled with a plus
sign because of the minus sign in \req{CX-Prony}.
Further note, that while the stiffness matrix $\Omega$ is readily available
in this case, the mass matrix is not and therefore has to be provided as 
additional information.

A potential advantage of using position data is that in some applications
position data decay more slowly than velocity data and therefore provide
a better signal to noise ratio; on the other hand,
high frequency information in the velocity data is damped in the position
autocorrelation function by virtue of \req{CXpp}.

%%%%%%%%%%%%%%%%%%%%%%%%%%%%%%%%%%%%%%%%%%%%%%%%%%%%%%%%%%%
\section{Numerical results} 
\label{Sec:Numerics}
%%%%%%%%%%%%%%%%%%%%%%%%%%%%%%%%%%%%%%%%%%%%%%%%%%%%%%%%%%
\subsection*{Example 5.1}
As our first example we consider the translation and rotation invariant system
of a spherical particle in a harmonic trap, which gives rise to one-dimensional 
processes $V$ and $R$. The detailed setup of this system and the generation
of corresponding molecular dynamics data for the four cross-correlation
functions $C_V$, $C_R$, $C_{VR}$, and $C_{RV}$ is specified in 
Appendix~\ref{App}. These four scalar functions
are shown as black dashed lines in Figure~\ref{fig:1D}. 
%\textcolor{red}{In this example the (scalar) stiffness parameter is 
%$\Omega=100$.}

\begin{figure}
    \centering
    \includegraphics[width=\linewidth]{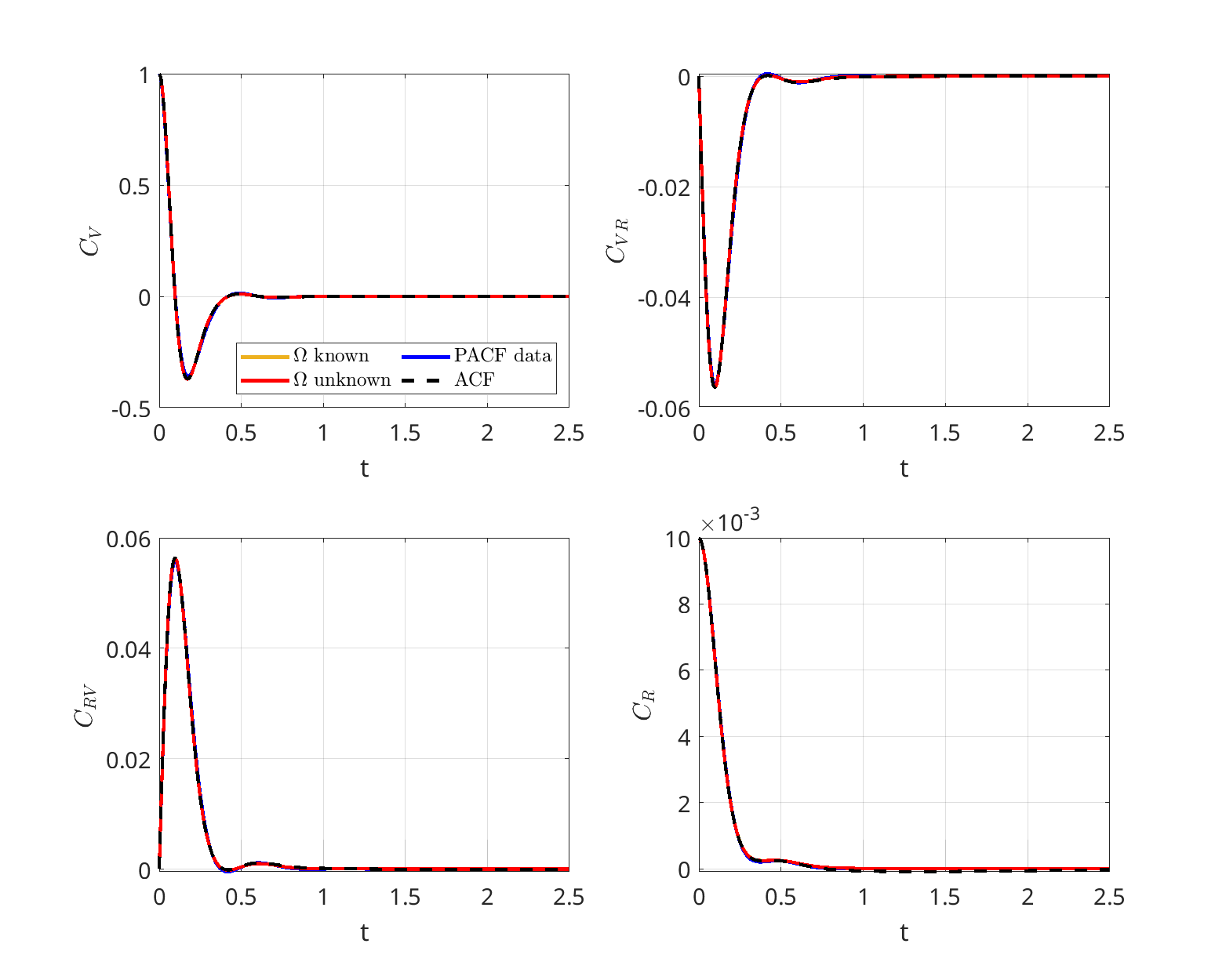}
    \caption{Autocorrelation function of Example 5.1 versus time $t$ and 
    the three numerical approximations. Each panel shows the scalar graphs
    corresponding to the respective entry of the matrix-valued function.}
    \label{fig:1D}
\end{figure}

We now discuss constructions $X\approx R$ and $Y\approx V$ computed by the 
algorithms proposed above. For our first two algorithms we use 
$n+1=41$ samples 
of the velocity autocorrelation function with a grid spacing 
$\tau = 0.025$ as input; the corresponding grid therefore covers
the initial time interval $[0,1]$. These first two algorithms differ in that
the first one assumes that $\Omega$ is known a priori, so we enforce 
the constraint~\req{sdp-optional} for this approximation, while
the second one assumes that $\Omega$ is unknown. Accordingly, the resulting
two approximations $C_Y$ of $C_V$ use the same number of terms in the 
Prony series~\req{Prony-series} and the same exponents, 
because the latter are determined in the first phase of our algorithm, 
which is the same for both methods. In this first phase, where we employ
the AAA algorithm, cf.~Section~\ref{Subsec:exps},
we evaluate the generating function on an equiangular grid with 100 grid points 
on a circle with radius $\rho = 1.15$. 
After initializing the list of support points with the two real grid points 
at $\pm\rho$, the AAA algorithm performs three iterations in this example
and determines a rational function with seven admissible poles to reduce the
residuals below the tolerance $\eps=10^{-4}$.
One of these poles is positive, 
and the remaining ones come in three complex conjugate pairs. 
The two resulting Prony series approximations of the velocity
autocorrelation function thus have $p=7$ terms, each.

We determine the (scalar) coefficients $\Gamma_j$ of these Prony series
as described in Section \ref{Subsec:coeff}. Recall from 
Proposition~\ref{Prop:sdp} that the associated constrained 
least-squares problems~\req{sdp-problem} have unique solutions, each,
and note that according to Remark~\ref{Rem:p} we have seven free parameters
to satisfy three or four scalar equations in the present one-dimensional case,
depending on whether \req{sdp-optional} is included as a constraint or not.
The remaining degrees of freedom provide the flexibility to optimize the 
least-squares fit.
In either case, i.e., with and without the constraint~\req{sdp-optional}, 
the solution of the linearly constrained linear least squares problems
is no feasible solution of the full convex problem~\req{sdp-problem}, 
because $\YG_3$ happens to be negative. Therefore the semidefinite program 
is invoked to solve the optimization problem;
since $\PsiG_2$ has been positive in this example 
we only add the constraint $\YG_3\geq 0$ to the previous equality constraints.
In both cases the software {\sc cvx} returns
an approximate solution of \req{sdp-problem} which is feasible
in that both, $\YG_3$ and $\PsiG_2$, are positive.
Also, the singular Lur'e equations~\req{singLure} turn out to be solvable 
in both cases, so the two Prony series approximations yield valid 
autocorrelation functions.
The resulting two Ornstein-Uhlenbeck systems have dimension $N=d\cdot p=7$, each, 
and therefore they come with $7-2=5$ auxiliary variables.

The third algorithm uses samples of the position autocorrelation function
instead of velocity data; compare Section~\ref{Subsec:PACF}.
These data are sampled on a grid with the same grid spacing $\tau=0.025$ as
before, but this time we only use $n+1 = 31$ data points corresponding to 
the time interval $[0,0.75]$. The reason is -- as can be seen from 
Figure~\ref{fig:1D-error} (see the text below) -- that the autocorrelation 
function reaches the noise level near $t=0.75$ while the velocity 
autocorrelation function stays above the noise level for all $t\in [0,1]$. 
Accordingly, we also choose a larger value $\rho=1.2$ for computing
interpolation data~\req{Fk} for the generating function to achieve a similar 
(relative) truncation level as before despite the reduced number of 
available terms in \req{Fk}.
Here we run the AAA algorithm with the smaller stopping tolerance 
$\eps=10^{-6}$ to cope for the fact that $C_R(0)=(\beta\Omega)^{-1}=1/100$
is two orders of magnitude smaller than the sample values of the velocity 
autocorrelation function that have been used above. 
The AAA algorithm performs two  
iterations and returns a rational function with five admissible poles,
one of them being positive and the remaining four being complex conjugate 
pairs. So the resulting Prony series approximation of $C_V$ has $p = 5$ terms.
It turns out that the solution of the corresponding linearly constrained 
linear least squares problem is feasible in that $\YG_3$ and $\PsiG_2$ 
are both positive. The solution of the corresponding Lur'e equations yields
an Ornstein-Uhlenbeck process of dimension $N = d\cdot p = 5$, which means that the
corresponding system has three auxiliary variables.

\begin{figure}
    \centering
    \includegraphics[width=\linewidth]{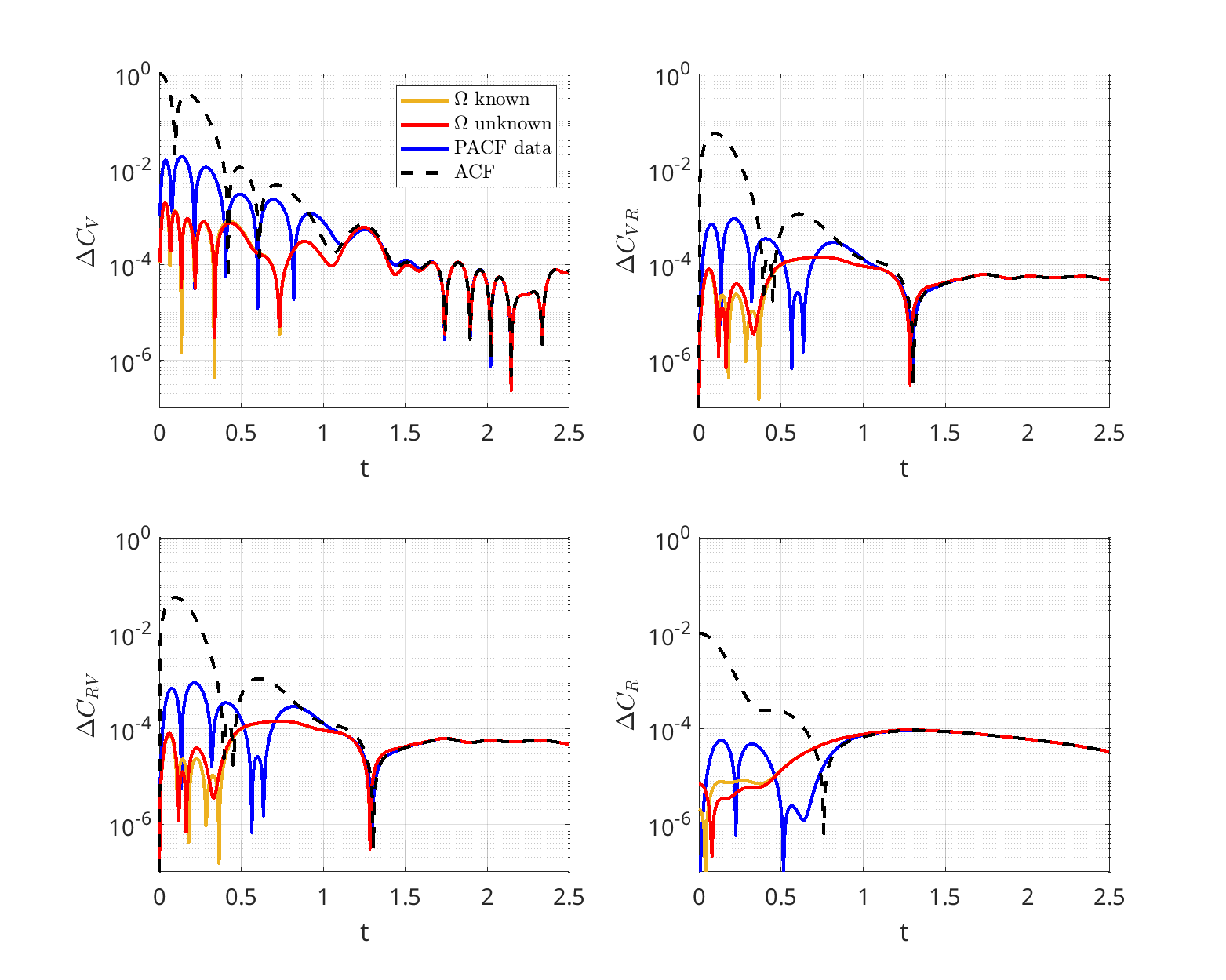}
    \caption{Absolute error between the true autocorrelation function of 
    Example 5.1 and the three approximations. Each panel contains the scalar graphs corresponding to the error of the 
    respective entry of the matrix-valued approximation. 
    The black dashed lines show the
    target functions for additional information.}
    \label{fig:1D-error}
\end{figure}

The associated autocorrelation functions are depicted in 
Figure~\ref{fig:1D}: The orange graphs correspond to the functions
where velocity autocorrelation data and the value of $\Omega$ are being used, 
the red graphs correspond to velocity autocorrelation data without
the use of $\Omega$, and the blue ones are the approximations based on
position autocorrelation data. 
%Each panel corresponds to the respective scalar entries of the $2\times 2$ 
%matrix-valued function. 
In these plots the blue and orange lines are almost fully covered by 
the red ones, which in turn show a very good alignement with the original
data (in black). For a better evaluation of the quality of these approximations 
we refer to Figure \ref{fig:1D-error}, which shows 
the absolute differences between the respective approximation and 
the true values in a semilogarithmic scale; also included here are the 
reference functions (in black); otherwise, the color coding 
is the same as in Figure~\ref{fig:1D}. Returning to what we have indicated
before, the black lines decay down to an absolute noise level of 
about $10^{-4}$; for the velocity autocorrelation function (top left panel) 
this is reached near $t=1.5$, whereas the position autocorrelation function 
(bottom right panel) is down to the noise level already at about $t=0.75$. 
Since the Prony series decay exponentially their errors are dominated by the
reference function values, as soon as the latter have reached the noise level.
%As is already suggested from the insets in Figure~\ref{fig:1D} the errors
%from $t=1$ onwards are dominated by the given data of the functions, 
%because the Prony series approximations vanish exponentially fast, while 
%the given data have slower asymptotic decay 
%(cf., e.g., \cite[Section~9.1]{Zwan01}) and, above all, are dominated by 
%statistical errors. 
One can further see that the two approximations from velocity data 
are significantly better than the third one in blue by almost one 
order of magnitude. This reflects the situation that more samples of the
velocity data could be used, which led to more terms in the corresponding 
Prony series and thus to a better data fit; compare the final remark in
Section~\ref{Subsec:PACF}.
%\textcolor{red}{Another reason for the inferior results when using position 
%data may be that the position autocorrelation function is the second
%antiderivative of the negative velocity autocorrelation function, 
%cf.~Remark~\ref{Rem:fullprocess}, and hence, 
%information in the latter is being smoothed out in the position data.}
The two approximations based on velocity autocorrelations in turn have fairly 
similar quality; only in an initial time interval the match of the position 
autocorrelation data differs because of the slight misfit of the 
$\Omega$-approximation: the corresponding estimate 
$C_X(0)=(\beta\Omega_0)^{-1}$ can be seen to differ from the true value 
$C_R(0)=0.01$ by a relative error of $10^{-3}$, roughly, 
which is similar to the errors in the approximations of $C_V$. 
But aside of that, in this example the incorporation of the additional 
constraint~\req{sdp-optional} does not lead to a significant benefit.

As in \cite{BBHSH26} we have varied the number of grid points to test
the robustness of our algorithm. Note that our algorithm is not fail-proof
in that we only incorporate a necessary condition for our approximation
$\varphi$ to be of positive type; compare Section~\ref{Subsec:coeff}. 
If $\varphi$ fails to be of positive type then the Lur'e equations 
have no solution, and we cannot generate a corresponding Ornstein-Uhlenbeck process. 
It turns out that for this example, the robustness of the
method is very similar to the one in \cite{BBHSH26}: our algorithm has been 
successful for the vast majority of tests that we have made.

\subsection*{Example 5.2}
\begin{figure}
    \centering
    \includegraphics[width=1\linewidth]{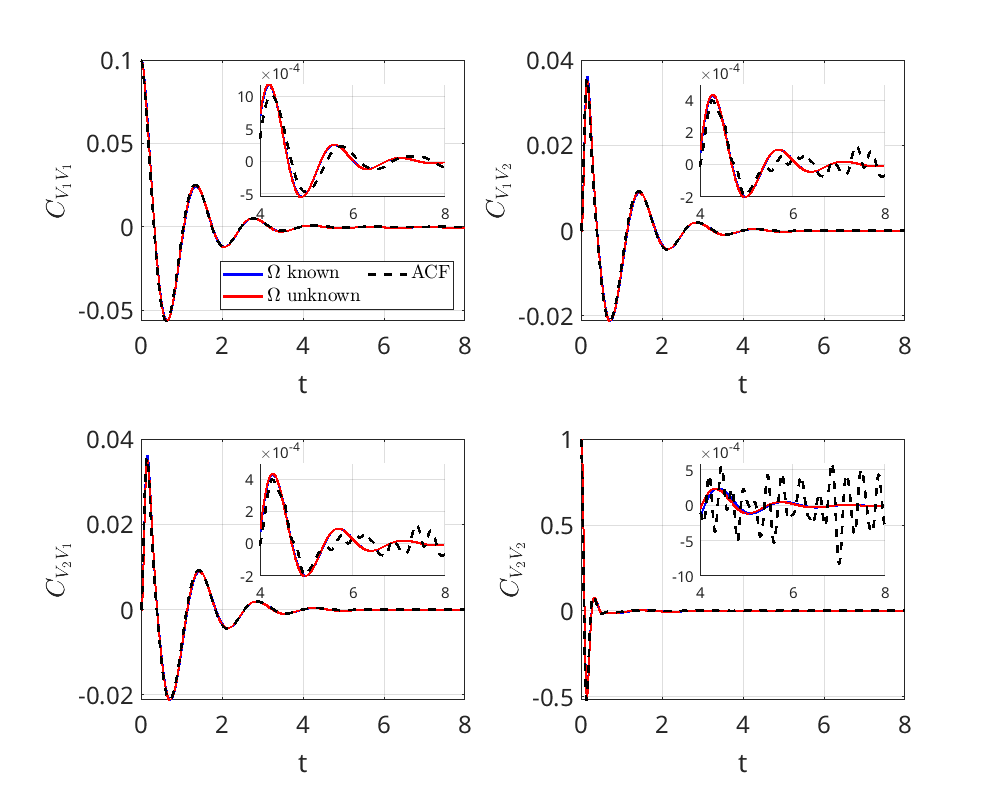}
    \caption{Velocity autocorrelation function of Example 5.2 versus
    time and the two approximations with and without the integral constraint 
    \req{sdp-optional}. Each panel shows the scalar graphs corresponding to the 
    respective entry of the matrix-valued function.}
    \label{fig:2D-VACF-neu}
\end{figure}
Next we consider the dynamics of two coupled particles of different 
masses in a harmonic trap, which gives rise to two-dimensional processes $V$ 
and $R$, each. Again we refer to Appendix~\ref{App} for a detailed description 
of the physical system. In this example we focus on the use of velocity data, 
which are displayed in Figure~\ref{fig:2D-VACF-neu},
because the numerical results for position data have been inferior in 
Example~5.1. More precisely, we use $n+1 = 31$ samples of $C_V\in\R^{2\times 2}$ 
with a grid spacing of $\tau = 0.1$ as input data, covering the initial time 
interval $[0,3]$; as can be seen from the insets in 
Figure~\ref{fig:2D-VACF-neu}, near $t=4$ the individual entries of $C_V$ 
reach the absolute noise level of about $5\cdot 10^{-4}$.

This time we need to preprocess the given data, 
because the mass matrix is not a multiple of the identity matrix, 
compare \req{mass}. To this end we rescale the velocity autocorrelation 
function $C_V$ by computing the Cholesky decomposition $C_V(0) = L_CL_C^T$ and 
multiplying $C_V(t)$ by $L_C^{-1}$ and $L_C^{-T}$ from left and right, 
respectively.
%The rescaled velocity autocorrelation function $C_V$ of this process is shown 
%as black dashed line in Figure \ref{fig:2D-VACF}. \textcolor{red}{No, don't!}
After preprocessing the data, the corresponding generating function is
evaluated for the AAA scheme on an equiangular grid with $100$ grid points 
on a circle, this time with radius $\rho = 1.5$.
The AAA algorithm performs four iterations to reduce the residuals below the 
error tolerance of $\varepsilon = 5\cdot10^{-4}$, 
and provides a matrix-valued rational function with seven admissible poles. 
One of these poles is negative and is thus duplicated according to 
\req{logpole}; the other six poles consist of two real and two complex 
conjugate pairs. The Prony series ansatz thus has $p=8$ terms in this 
particular example.

Take note here that since we only have one negative pole and $p\ge 7$, 
the optimization problem~\req{sdp-problem} for the coefficient matrices 
of the Prony series has a unique minimizer according to 
Proposition~\ref{Prop:sdp}.
Again we compute two approximations, one with and one without the integral 
constraint \req{sdp-optional}. In both cases the solution of the linearly
constrained linear least squares problem is not in the feasible set of the
overall convex optimization problem, because $\YG_3$ fails to be positive 
definite; We add the violated constraint to the 
list of constraints and use {\sc cvx} for the solution of the augmented 
optimization problems. This leads to a feasible approximate minimizer in
the case of using \req{sdp-optional}, but fails in the second case. 
The reason is that now $\YG_3$ is positive definite as requested, 
but $\PsiG_2$ is not. We therefore restart {\sc cvx} for this instance 
with both semidefiniteness constraints
and finally obtain feasible approximate minimizers of \req{sdp-problem}
for both cases. The corresponding singular Lur'e equations \req{singLure} 
are solvable, and therefore the two Prony series approximations provide valid 
autocorrelation functions $C_Y\approx C_V$.
The resulting Ornstein-Uhlenbeck system has dimension $N=d\cdot p = 16$, which corresponds
to $16 - 4 = 12$ auxiliary variables. 

\begin{figure}
    \centering
    \includegraphics[width=\linewidth]{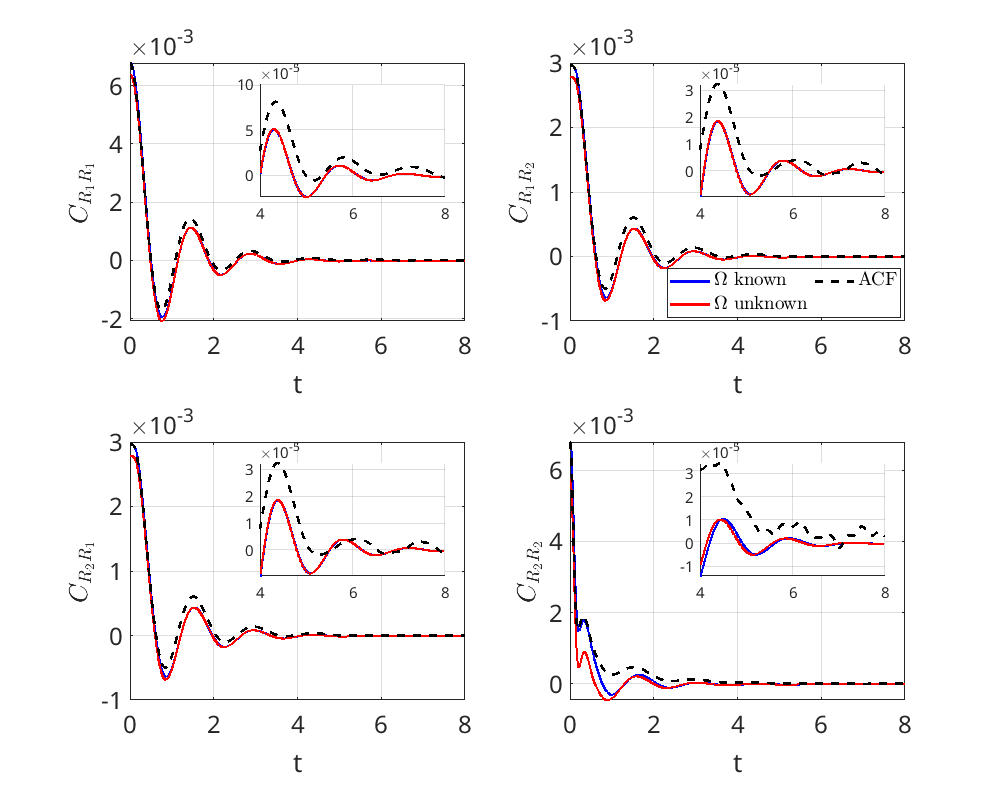}
    \caption{Position autocorrelation function of Example 5.2 versus time and 
    the two approximations with and without the 
    integral constraint~\req{sdp-optional}. Each panel shows the scalar graphs 
    corresponding to the respective entries of the matrix-valued function.}
    \label{fig:2D-PACF-neu}
\end{figure}

See Figure~\ref{fig:2D-VACF-neu} for a plot of $C_V$ and the two approximations
$C_Y$ and Figure~\ref{fig:2D-PACF-neu} for the corresponding 
position autocorrelation functions.
%we omit the cross-correlations $C_{RV}$ and $C_{VR}$
Here the blue lines correspond to the functions where $\Omega$ is known 
and the red lines correspond to the other ones, respectively, and the 
reference functions are black and dashed. 
It can be seen that the velocity autocorrelation functions provide an excellent
match for both algorithms, while the approximations of the 
position autocorrelation functions are slightly inferior. 
We omit plots of the cross-correlations; the fits are of comparable quality.

\begin{figure}
    \centering
    \includegraphics[width=\linewidth]{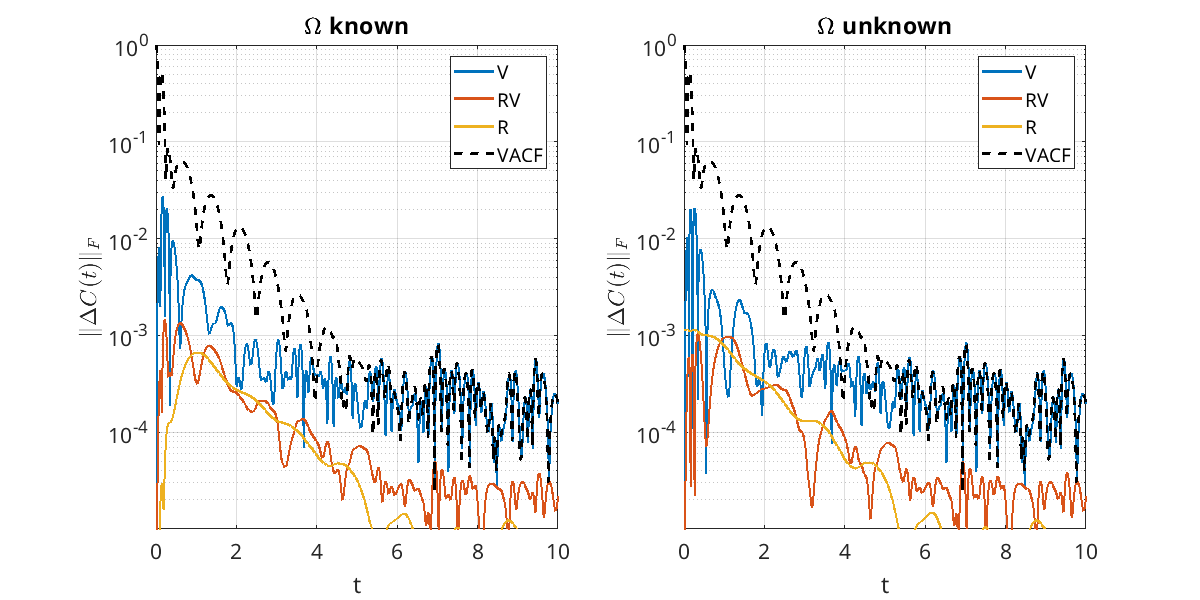}
    \caption{Absolute error between the given velocity, position and cross correlation of Example 5.2 and the two approximations. The left panel shows the approximation error where $\Omega$ is known, the right panel where $\Omega$ is unknown, respectively.}
    \label{fig:2D-Error-neu}
\end{figure}

Concerning the use of the $\Omega$-constraint~\req{sdp-optional} one can
only see a benefit in the lower right panel of Figure~\ref{fig:2D-PACF-neu};
otherwise the two approximations are of very similar quality.
This is also confirmed in Figure~\ref{fig:2D-Error-neu}, which provides
a better quantitative picture of the error of the respective approximations.
In this figure we display the Frobenius norm of the errors
in the $2\times 2$ correlation functions $C_V$, $C_R$, and $C_{RV}$
versus time. Here the left-hand panel corresponds to the approximation where
$\Omega$ is prescribed, whereas the right-hand panel corresponds to the 
approximation which presumes $\Omega$ to be unknown.
We observe similar error histories in both panels; only for $t<1$ the error
in $C_R$ is better in the left panel, where $\Omega$ has been prescribed.
Overall, the absolute error in $C_R$ is smaller by about one order of
magnitude, but the individual entries of $C_R$ in itself are also smaller 
than those of $C_V$ by (at least) the same amount.
% approximation
%Furthermore the method is quite stable for different number of grid points $n$.

\section{Conclusion}
In this work we have presented a framework for constructing Markovian 
embeddings for generalized (multidimensional) Langevin equations 
with an external harmonic potential from autocorrelation data. 
We have further discussed conditions for stationary solutions of these
equations. The algorithm which we suggest is an extension of a method
presented in \cite{BBHSH26} to cope for the harmonic potential in the
system. This method determines a rational approximation with a matrix-valued 
version of the AAA algorithm of the generating function of the given data
and uses the poles of this rational function to choose the exponential terms
of a Prony series approximation of the data. The coefficients of this 
expansion are determined by solving a linear least-squares problem with 
semidefiniteness constraints.
Numerical experiments with one- and two-dimensional molecular dynamics data
have been shown to be very successful and indicate that the method has a 
high potential to generalize faithfully to more complex physical data.

\appendix
\section{Details of the molecular dynamics simulation}
\label{App}
The test data for the two examples in Section~\ref{Sec:Numerics} have been
generated with the software package LAMMPS~\cite{thompson2022}
(stable release April 2, 2025). 
To be specific we have simulated a liquid of $1000$ particles interacting 
with a Weeks-Chandler-Andersen potential~\cite{andersen1991} 
using the TVN-ensemble,
where the mass of the particles, the thermal energy, and the characteristic 
energy and length scale parameters of the interaction potential are unity 
in the reduced unit system.
The cubic simulation box with periodic boundary conditions has an edge
length of $10.7722$ spatial units corresponding to a particle density of $0.8$.
The initial configuration, with all particles on a regular 
Cartesian grid with unit grid spacing, has
%Initial configurations have 
been set up using the molecular builder Moltemplate~\cite{jewett2021}.

For the one dimensional system discussed in Example~5.1 one of the particles 
(our particle of interest) is attached to its starting position by 
a harmonic spring with spring constant $\Omega=100$.
To remove the linear momentum introduced into the liquid by this external
force the center of mass motion of the liquid is set to zero every 
$1000$~steps. Using a timestep of $0.005$ we integrate both systems for 
$5\cdot 10^4$ steps with a
Langevin thermostat (cf., e.g., D\"unweg and Paul~\cite{dunweg1991}) 
using a thermostat constant of one time unit to obtain 
equilibrated structures. Then, to obtain 50 independent configurations of 
each system we randomize the velocity of all particles and integrate for 
another $10^5$ steps with a Nosé-Hoover-type thermostat
(Tuckerman et al~\cite{tuckerman2006}) using a thermostat constant of 
$5.0$ time units.
Finally we integrate these 50 independent systems for $10^7$ steps, each, 
saving the position and velocity of the trapped particles at every time step. 
The correlation functions are averaged over all $50$ systems 
and all three spatial dimensions.

For the two dimensional system we use the same setting, except that 
we set the time step to $0.0025$ and attach two neighboring particles to 
their starting positions with harmonic springs with spring constant 
$\Omega=100$, each, and on top of that, connect these two particles 
with an elastic rod with spring constant $\Omega=50$.
We modify the mass of the second particle to be $m=10$, i.e., ten times heavier
than the first one, and we switch off the Weeks-Chandler-Andersen interaction
between the two particles. Finally, we constrain two coordinates of 
the two particles to their initial values and only allow their motion in the
third coordinate axis of the Cartesian grid; accordingly the velocities
of the two tagged particles are scalar functions, each.
Take note that in this example so-called depletion layers around the 
two particles under consideration give rise to additional external forces
acting on these particles, 
cf., e.g., Lekkerkerker, Tuinier, and Vis~\cite{LTV24},
and therefore the quadratic form~\req{Phi} of the
potential $\PhiG$ is just its leading order approximation.

%The effective potential between the two particles is determined both by the 
%added springs and by the effective interaction between two otherwise non-
%interacting particles in a liquid, which is commonly employed in free-energy 
%methods such as umbrella sampling.\cite{KaeUS2011}

%%%%%%%%%%%%%%%%%%%%%%%%%%%%%%%%%%%%%%%%%%%%%%%%%%%%%%%%%%%

%%%%%%%%%%%%%%%%%%%%%%%%%%%%%%%%%%%%%%%%%%%%%%%%%%%%%%%%%%%
\end{document}